\newcommand{\poly}{\text{poly}}
\newtheorem{hypothesis}{Hypothesis}
\newcommand{\cnd}{\mskip 1mu|\mskip 1mu}
\DeclareMathOperator{\C}{\mathrm{C}}
\DeclareMathOperator{\CD}{\mathrm{CD}}
\newcommand{\RL}{{\mathrm{RL}}}
\newcommand{\SC}{{\mathrm{SC}}}
\theoremstyle{plain}
\newtheorem{theorem}{Theorem}
\newtheorem*{lemma}{Lemma}
\theoremstyle{remark}
\newtheorem{remark}{Remark}
\newtheorem{example}{Example}
\begin{document}
	%
	%
	\pagestyle{headings}  
	%
	
	%
	\title{On Algorithmic Statistics for space-bounded algorithms}
	
	\author{Alexey Milovanov \\
		National Research University Higher School of Economics, \\
		Moscow Institute of Physics and Technology, \\
		Moscow State University, \\
		almas239@gmail.com}  
	
	\maketitle              
	
	\begin{abstract}
		Algorithmic statistics studies explanations of observed data that are good in
		the algorithmic sense:  an explanation should be simple i.e. should have small Kolmogorov complexity and capture all the
		algorithmically discoverable regularities in the data. However this idea can not be used in practice because Kolmogorov complexity is not computable.
		
		In this paper we develop algorithmic statistics using space-bounded Kolmogorov complexity. We prove an analogue of one of the main result of `classic' algorithmic statistics (about the connection between optimality and randomness deficiences). The main tool of our proof is the Nisan-Wigderson generator.
	
	\end{abstract}
	\section{Introduction}
	In this section we give an introduction to algorithmic statistics and present our results.
	
	We consider strings over the binary alphabet $\{0,1\}$. We use $|x|$ to denote the length of a string $x$. All of the logarithms are base $2$. Denote the conditional Kolmogorov complexity\footnote{The definition and basic properties of Kolmogorov complexity can be
		found in the textbooks \cite{lv,suv}, for a short survey see \cite{shen15}.} of $x$ given $y$ by $\C (x \cnd y)$. 
	\subsection{Introduction to  Algorithmic Statistics}
	Let $x$ be some observation data encoded as a binary string, we need to find a suitable explanation for it. An  explanation ($=$model) is a finite set containing $x$. More specifically we want to find a simple model $A$ such that $x$ is a typical element in $A$. How to formalize that $A$ is `simple' and $x$ is a `typical element' in $A$? In classical algorithmic statistics a set $A$ is called simple if it has small Kolmogorov complexity $\C (A)$\footnote{Kolmogorov complexity of $A$ 
		is defined as follows. We fix any computable bijection $A \mapsto [A]$ 
		from the family of finite sets
		to the set of binary strings, called \emph{encoding}.
		Then we define $\C (A)$ as the complexity $\C ([A])$ of the code $[A]$ of $A$.}. To measure typicality of $x$ in $A$ one can use the \emph{randomness deficiency} of $x$ as an element of $A$:
	$$ d(x \cnd A) := \log|A| - \C (x \cnd A). $$
	The randomness deficiency is always non-negative 
	with $O(\log |x|)$ accuracy, as 
	we can find $x$ from $A$ and the index of $x$ in $A$.  
	For most elements $x$ in any set $A$ the randomness deficiency 
	of $x$ in $A$ is negligible.
	More specifically, the fraction of $x$ in $A$
	with randomness deficiency greater than $\beta$ is less than 
	$2^{-\beta}$. 
	
	There is another quantity  measuring the quality of $A$ as an explanation of $x$: the \emph{optimality deficiency}:
	$$\delta(x,A):= \C (A) + \log|A| - \C (x). $$ 
	It is also non-negative with logarithmic accuracy (by the same reason). This value represents the following idea: a good explanation (a set) should not only be simple but also should be small. 
	
	One can ask: why as explanations we consider only sets---not general probability distributions? This is because for every string $x$ and for every distribution $P$ there exists a set $A \ni x$ explaining $x$ that is not worse than $P$ in the sense of deficiencies defined above\footnote{The randomness deficiency of a string $x$ with respect to a distribution $P$ is defined as $d(x \cnd P):= -\log P(x) - \C (x \cnd P)$, the optimality deficiency is defined as $\delta(x,P):= \C (P) - \log P(x) - \C (x)$. }.
	\begin{theorem}[\cite{vv}]
		\label{sets_and_distributions0}
		For every string $x$  and for every distribution $P$ there exists a set $A \ni x$ such that $\C (A \cnd P) \le O(\log |x|)$ and 
		$\frac{1}{|A|} \ge \frac{1}{2}P(x)$.
	\end{theorem}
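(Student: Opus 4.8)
The plan is to construct $A$ explicitly from the distribution $P$ using a level-set (or "slice") construction based on the probability values $P(x)$. The key idea is that to explain $x$ with a set that is no worse than $P$, I want a set $A \ni x$ whose size is roughly $1/P(x)$, so that $\log|A| \approx -\log P(x)$ matches the $-\log P(x)$ term appearing in both deficiencies for distributions.

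First I would fix the string $x$ and look at the value $p := P(x)$. Let $k$ be the integer with $2^{-k-1} < P(x) \le 2^{-k}$, i.e. $k = \lceil -\log P(x) \rceil$ (up to the choice of rounding). Then I would define
\[
A := \{\, y : P(y) > 2^{-k-1} \,\},
\]
the set of all strings whose $P$-probability exceeds the threshold $2^{-k-1}$. By construction $x \in A$, since $P(x) > 2^{-k-1}$. The size of $A$ is controlled by a counting argument: since the total probability is at most $1$ and every element of $A$ has probability exceeding $2^{-k-1}$, we get $|A| < 2^{k+1}$, hence $1/|A| > 2^{-k-1} \ge \tfrac12 P(x)$, which is exactly the size bound required. (The constant is where the factor of $\tfrac12$ in the statement comes from; one should track the rounding direction carefully to land on the right inequality.)

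It remains to bound $\C(A \cnd P)$. The point is that $A$ is determined entirely by $P$ together with the single integer parameter $k$: given $P$ and $k$, one can enumerate all strings $y$ (up to the relevant length, which is polynomial in $|x|$) and test whether $P(y) > 2^{-k-1}$, thereby reconstructing $A$. Since $k$ is an integer of magnitude $O(|x|)$ — because $P(x)$ cannot be smaller than roughly $2^{-\poly(|x|)}$ for the relevant strings, and in any case $x$ has length $|x|$ so $k \le |x| + O(1)$ for the purpose of the bound — describing $k$ costs only $O(\log|x|)$ bits. Hence $\C(A \cnd P) \le \C(k \cnd P) + O(1) \le O(\log|x|)$, as claimed.

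The main obstacle I expect is making the description of $A$ genuinely algorithmic and finite: a priori $P$ is a distribution over all of $\{0,1\}^*$, so the set $\{y : P(y) > 2^{-k-1}\}$ is automatically finite (it has fewer than $2^{k+1}$ elements), but one must ensure it is computably enumerable from the code of $P$ so that $A$ can be recovered from $P$ and $k$. Provided $P$ is given in a form from which its values can be (semi-)computed — which is the standing convention for distributions of finite complexity — this is routine, and the only quantity that needs to be communicated beyond $P$ itself is the threshold index $k$, giving the $O(\log|x|)$ conditional complexity bound. The delicate bookkeeping is purely in the rounding constants to guarantee both $x \in A$ and $\tfrac{1}{|A|} \ge \tfrac12 P(x)$ simultaneously.
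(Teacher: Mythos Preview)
The paper does not actually prove this theorem; it is quoted from~\cite{vv} without proof. Your level-set construction is the standard argument for this result and is essentially correct: define $A=\{y:P(y)>2^{-k-1}\}$ for the appropriate $k$, get $|A|<2^{k+1}$ by the total-probability bound, and observe that $A$ is determined by $P$ and $k$. This is also exactly the idea the paper uses when it does prove the space-bounded analogue (Theorem~\ref{sets_and_distributions_for_memory1}): there too one fixes $k$ with $2^{-k+1}\ge P(x)\ge 2^{-k}$ and builds (an approximation of) the level set.

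There is one genuine gap in your write-up. Your claim that ``$k\le|x|+O(1)$ because $x$ has length $|x|$'' is not valid: nothing prevents $P(x)$ from being arbitrarily small, so $k=\lceil-\log P(x)\rceil$ need not be bounded by $|x|$. The fix is the one the paper uses in the proof of Theorem~\ref{sets_and_distributions_for_memory1}: if $k>|x|$, simply take $A=\{0,1\}^{|x|}$. Then $1/|A|=2^{-|x|}\ge 2^{-k}\ge\tfrac12 P(x)$ and $\C(A\cnd P)=O(\log|x|)$ trivially. With that case handled, $k\le|x|$ in the remaining case and your $O(\log|x|)$ bound on $\C(k)$ goes through. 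Your worry about computability of $A$ from $P$ is legitimate but standard: the convention is that $P$ has finite complexity only if it is computable, and then the level set is a finite computable set given $P$ and $k$.
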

	Kolmogorov called a string $x$ \emph{stochastic} if there exists a set $A \ni x$ such that $\C (A) \approx 0$ and $d(x \cnd A)\approx 0$. The last equality means that $\log|A| \approx \C (x \cnd A)$ hence $\log|A| \approx \C (x)$ because $\C (A) \approx 0$. So, $\delta(x,A)$ is also small.
	
	For example, an incompressible string of length $n$ (i.e. a string whose complexity is close to $n$) is stochastic---the corresponding set is $\{0,1\}^n$. Non-stochastic objects also exist, however this fact is more complicated---see \cite{shen,vv}.
	\subsection{Space-bounded Algorithmic Statistics}
	As mentioned by Kolmogorov in \cite{kolm65}, the notion of Kolmogorov complexity $\C(x)$ has the following minor point.  It ignores time and space needed to produce $x$ from its short description. This minor point can be fixed by introducing space or time bounded Kolmogorov complexity (see, for example, \cite{bfl} or \cite{sipser}). In this paper we consider algorithms whose space (not time) is bounded by a polynomial of the length of a string.
	
	The distinguishing complexity of a string $x$ with space bound $m$ is defined as the minimal length of a program $p$ such that
	\begin{itemize}
		\item $p(y)=1$ if $y = x$;
		\item $p(y) = 0$ if $y \not = x$;
		\item $p$ uses at most $m$ bits of memory on every input.
	\end{itemize}
	We denote this value by $\CD^m(x)$. If for some $x$ and $m$ such a program $p$ does not exist then $\CD^m(x):= \infty$.
	We say that $p$ \emph{distinguishes} $x$ (from other strings) if $p$ satisfies the first and the second requirements of the definition. 
	
	In this definition  $p(y)$ denotes $V(p,y)$ for a universal Turing machine $V$. A Turing machine is called universal if
	for every machine $U$ and for every $q$ there exists $p$ such that $V(p,y) = U(q,y)$ for every $y$, $|p| < |q| + O(1)$  and $V$ uses space at most $O(m)$  if $U$ uses space $m$  on input $(q,y)$. Here the constant in $O(m)$ depends on $V$ and $U$ but does not depend on $q$ \footnote{Such an universal machine does exist -- see \cite{lv}.}.
	
	Now we extend this notion to arbitrary finite sets. The distinguishing complexity of a set $A$ with space bound $m$ is defined as the minimal length of a program $p$ such that
	\begin{itemize}
		\item $p(y)=1$ if $y \in  A$;
		\item $p(y) = 0$ if $y \notin  A$;
		\item $p$ uses space $m$ on every input.
	\end{itemize}
	Denote this value as $\CD ^m(A)$.
	
	The value $\CD^a(x \cnd A)$ is defined as the minimal length of a program that distinguishes $x$ by using space at most $m$  and uses $A$ as an oracle. The value $\CD^a(B \cnd A)$ for an arbitrary finite set $B$ is defined the same way.
	
	How to define typicality of a string $x$ in a set $A$? Consider the following 
	resource-bounded versions of randomness and optimal deficiencies:
	$$d^a(x \cnd A):=\log|A| - \CD^a(x \cnd A),$$
	$$\delta^{b,d}(x, A):= \CD^b(A) + \log|A| -  \CD^d(x) .$$
	One can show that these values are 
	non-negative (with logarithmic accuracy) provided $a \ge p(|x|)$ and $d \ge p(|x| + b)$ for a large enough polynomial  $p$.
	
	We say that a set $A$ is a good explanation for a string $x$ (that belongs to $A$) if $\CD^r(A) \approx 0$ (with $O(\log |x|)$ accuracy) and $\log|A| \approx \CD^m(x)$. Here $r$ and $m$ are some small numbers. For such $A$ the values 
	$d^m(x \cnd A)$ and $\delta^{r,m}(x, A)$ are small.
	
	It turns out that every string has a good explanation. Indeed,
	let $x$ be a string such that $\CD^m(x) = k$. Define a set $A \ni x$ as  $\{y \mid \CD^m(y) \le k\}$. The log-size of this set is equal to $k$ up to a non-negative constant and hence $\log|A| = \CD^m(x)$. Note that $A$ can be distinguished by a program of length $O(\log (k + m))$ that uses $\poly(m)$ space.
	
	So, for space-bounded algorithms all strings have  
	good explanations (in other words, they are stochastic).
	
	\subsection{Distributions and Sets}
	Recall that in the classical algorithmic statistics for every distribution $P$ and every $x$ there is a finite set $A \ni x$ that is not worse than $P$ as an explanation for $x$. It turns out that  this the case also for space-bounded algorithmic statistics (otherwise we could not restrict ourselves to finite sets). 
	
	
	
	Before we formulate this result we give a definition of  the complexity of a probability distribution $P$ with space bound $m$ that is denoted by $\C^m(P)$. This value is defined as the minimal length of a program $p$ without input and with the following two properties. First, for every $x$ the probability of the event [$x$  output by $p$] is equal to $P(x)$. Second, $p$ uses space at most $m$ (always). If such a program does not exist then $\C^m (P) :=\infty$.
	\begin{theorem}
		\label{sets_and_distributions_for_memory1}
		There exist a polynomial $r$ and a constant $c$ such that for every string $x$, for every distribution $P$ and for every $m$ there exists a set $A \ni x$ such that $\CD ^{r(m + n)}(A) \le \C ^{m}(P) + c\log (n + m)$ and $\frac{1}{|A|} \ge P(x)2^{-c \log n}$. Here $n$ is length of $x$.  
	\end{theorem}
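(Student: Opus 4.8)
The plan is to let $p$ be an optimal space-$m$ program computing $P$, so that $|p| = \C^m(P)$, and to build $A$ as a sublevel set of $P$ for an appropriate threshold. Write $k := \lceil \log(1/P(x)) \rceil$, so that $2^{-k} \le P(x) < 2^{-k+1}$. The natural candidate is $A_k := \{y : P(y) \ge 2^{-k}\}$: it contains $x$, and since the values $P(y)$ sum to at most $1$ we get $|A_k| \le 2^k < 2/P(x) \le n^c/P(x)$, which already gives the required size bound. To describe $A_k$ it suffices to specify $p$ together with the integer $k$, so the only overhead over $\C^m(P)$ is the cost of encoding $k$.

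Since $p$ uses space $m$, on any random string it either halts within $2^{O(m)}$ steps or loops forever; hence every output has probability at least $2^{-2^{O(m)}}$, so $k \le 2^{O(m)}$ and naively encoding $k$ costs $\Theta(m)$ bits, which is too much for the claimed $O(\log(n+m))$ overhead. The observation that removes this obstacle is a case split on the size of $k$. If $k > n$, then $P(x) < 2^{-n+1}$, so the trivial set $A = \{0,1\}^n$ satisfies $x \in A$ and $|A| = 2^n \le 2^{k} \le n^c/P(x)$, and it is distinguishable in space $O(n)$ by a program of length $O(\log n)$ that stores only $n$. If $k \le n$, then encoding $k$ costs only $O(\log n)$ bits, and we use $A = A_k$. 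Either way the overhead above $\C^m(P)$ is $O(\log n) \le c\log(n+m)$.

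It remains to show that in the case $k \le n$ the set $A_k$ can be distinguished in space $\poly(m+n)$ given $p$ and $k$; this is the technical heart. On input $y$ the program must decide whether $P(y) \ge 2^{-k}$. I view $p$ as a Markov chain on its configuration graph, which has $2^{O(m)}$ vertices, each describable by $O(m)$ bits, and whose one-step transition probabilities (values in $\{0,\tfrac12,1\}$) are computable in space $O(m)$ by simulating a single step of $p$. The probability $P(y)$ that $p$ halts with output $y$ is the absorption probability of this chain into the halting-with-$y$ configurations, hence the solution of a linear system of dimension $2^{O(m)}$ over the rationals. I will invoke the standard fact that determinants and solutions of linear systems of dimension $N$ are computable in space $\poly(\log N)$ when the entries are accessible in small space; with $N = 2^{O(m)}$ this is $\poly(m)$ space. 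Thus each bit of the (exponentially long) integers produced by Cramer's rule is computable in $\poly(m)$ space, and the comparison $P(y) \ge 2^{-k}$ reduces to deciding the sign of an integer combination of two such determinants, which we resolve by scanning bit positions with an $O(m)$-bit counter. The running time is exponential, but only $\poly(m)$ space is used; adding the $O(\log|y|) = O(m)$ pointer into $y$ and the simulation of $p$, the total space is $\poly(m+n) =: r(m+n)$.

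The main obstacle is precisely this exact poly-space comparison of the output probability against the threshold: one must compute absorption probabilities of a $2^{O(m)}$-state Markov chain exactly enough to decide a possibly razor-thin inequality, without ever storing the $2^{O(m)}$-bit numbers involved. I expect to handle the possible singularity of $I-Q$ (when $p$ halts with probability strictly less than $1$) by working with the minimal nonnegative solution, i.e.\ the monotone limit of the finite-horizon absorption probabilities, which does not affect the space-complexity argument. Combining the two cases, the program that on input $y$ runs the trivial test (case $k>n$) or the probability comparison (case $k\le n$) witnesses $\CD^{r(m+n)}(A) \le \C^m(P) + c\log(n+m)$, while by construction $1/|A| \ge P(x)2^{-c\log n}$, as required.
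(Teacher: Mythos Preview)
Your proposal is correct but follows a genuinely different route from the paper's proof. Both arguments make the same preliminary reductions: handle $\C^m(P)=\infty$ trivially, set $k\approx\log(1/P(x))$, and take $A=\{0,1\}^n$ when $k>n$. The divergence is in how the membership test for the threshold set is implemented when $k\le n$.

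The paper does \emph{not} compute $P(y)$ exactly. Instead it builds a \emph{randomized} tester: run the sampler $g$ independently $2^{\Theta(k^2)}$ times and accept $y$ when its empirical frequency exceeds roughly $2^{-k}$; Hoeffding's inequality guarantees acceptance probability $>2/3$ when $P(y)>2^{-k-1}$ and $<1/3$ when $P(y)<2^{-k-2}$. Then Nisan's $\RL\subseteq\SC$ theorem is invoked as a black box to convert this into a deterministic $\poly(m+n)$-space program $\widehat f$ with $|\widehat f|\le|g|+O(1)$. The resulting set $A=\{y:\widehat f(y)=1\}$ is only an \emph{approximate} sublevel set, but it contains $x$ and has size $\le 2^{k+2}$, which suffices.

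Your approach bypasses the sample--then--derandomize detour: you compute $P(y)$ \emph{exactly} by viewing the sampler as a Markov chain on its $2^{O(m)}$ configurations and evaluating the absorption probability via Cramer's rule, relying on the fact that integer determinants of $N\times N$ matrices lie in uniform $\mathrm{NC}^2\subseteq\mathrm{DSPACE}(\log^2 N)$, hence in $\poly(m)$ space here. This yields the exact sublevel set and replaces the dependence on Nisan's theorem by the (arguably more elementary) small-space linear-algebra toolkit. The paper's route is more modular---one citation does all the work---while yours is more direct. One point you should make explicit rather than defer to the ``minimal nonnegative solution'' remark: first prune the configurations from which no halting state is reachable (an $\mathrm{NL}$ reachability test, so $\poly(m)$ space by Savitch), after which the restricted $I-Q$ is genuinely nonsingular and Cramer's rule applies cleanly.
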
 
	The main tool of the proof of Theorem \ref{sets_and_distributions_for_memory1} is the theorem of Nisan ``$\RL \subseteq \SC$'', more precisely its  generalization---Theorem 1.2 in  \cite{nisan}.
	\subsection{Descriptions of Restricted Type}
	So far we considered arbitrary finite sets 
	(or more general distributions) as models (statistical hypotheses). We have seen that for such class of hypotheses the theory becomes trivial. However, in practice we usually have some a priori information about the data. We know that the data was obtained by sampling with respect to an unknown probability distribution from a known family of distributions. For simplicity we will consider only uniform distributions i.e. a family of finite sets $\mathcal{A}$. 
	
	For example, we can consider the family of all Hamming balls as $\mathcal{A}$. (That means we know a priory that our string was obtain by flipping certain number of bits in an unknown string.) 
	Or we may consider the family that consists of all `cylinders': for every
	$n$ and for every string $u$ of length at most $n$ we consider the set of all
	$n$-bit strings that have prefix $u$.
	It turns out that for the second family
	there exists a string that has no  good explanations in this family: the concatenation of an incompressible string (i.e. a string whose Kolmogorov complexity is close to its length) and all zero string  of the same length. (We omit the rigorous formulation and the proof.)
	
	Restricting the class of allowed hypotheses was initiated in \cite{VerVit}.
	It turns out that there exists a direct connection between randomness and optimality deficiencies in the case when   a family is  enumerable. 
	\begin{theorem}[\cite{VerVit}]
		\label{without_bound}
		Let $\mathcal{A}$ be an enumerable family of sets. Assume that every set from $\mathcal{A}$ consists of strings of the same length.
		Let $x$ be a string of length $n$ contained in $A \in \mathcal{A}$. Then:
		\medskip
		
		\textup{(a)} $d(x \cnd A) \le \delta(x,A) + O(\log (\C(A) + n ))$.
		
		\medskip
		
		\textup{(b)}
		There exists $B \in \mathcal{A}$ containing $x$ such that:
		
		\medskip
		
		$\delta(x, B) \le d(x \cnd A) + O(\log (\C(A) + n ))$.
	\end{theorem}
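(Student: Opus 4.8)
The plan is to handle the two halves separately, because they have completely different characters: part (a) is a one-line consequence of subadditivity and uses neither enumerability nor the equal-length hypothesis, whereas part (b) is the substantive direction where both are needed.

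For (a) I would just expand both deficiencies and watch the $\log|A|$ terms cancel:
\[
d(x\cnd A)-\delta(x,A)=\bigl(\log|A|-\C(x\cnd A)\bigr)-\bigl(\C(A)+\log|A|-\C(x)\bigr)=\C(x)-\C(A)-\C(x\cnd A).
\]
Thus (a) is exactly the inequality $\C(x)\le \C(A)+\C(x\cnd A)+O(\log(\C(A)+n))$, i.e. plain subadditivity: one recovers $x$ from a shortest description of $A$ followed by a shortest program producing $x$ from $A$, the logarithmic term paying for the delimiter that tells the two halves apart, whose length is $O(\log(\C(A)+\C(x\cnd A)))=O(\log(\C(A)+n))$ since $\C(x\cnd A)\le n+O(\log n)$.

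For (b) I would reformulate the goal in terms of the best two-part code available inside the family. Set $\beta^{*}(x)=\min\{\C(B)+\log|B|: B\in\mathcal A,\ x\in B\}$ and let $B$ be a minimizer; this is the candidate, since for it $\delta(x,B)=\beta^{*}(x)-\C(x)$, so (b) is equivalent to $\beta^{*}(x)\le \C(x)+d(x\cnd A)+O(\log(\C(A)+n))$. Enumerability supplies the matching lower bound essentially for free: for each $\beta$ the set $W_\beta=\bigcup\{B\in\mathcal A:\C(B)+\log|B|\le\beta\}$ is enumerable given $\beta$ (complexity is upper semicomputable, so the defining inequality becomes verifiable at a finite stage), and since each contributing $B$ obeys $|B|\le 2^{\beta-\C(B)}$, a Kraft-type estimate $\sum_B 2^{-\C(B)}\le 1$ (up to the usual logarithmic correction for plain complexity) gives $|W_\beta|\le 2^{\beta+O(\log\beta)}$. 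Hence every covered string has complexity at most $\beta+O(\log\beta)$, so $\beta^{*}(x)\ge \C(x)-O(\log)$, which is just the non-negativity of $\delta$.

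The hard part, and the step I expect to be the main obstacle, is the upper bound on $\beta^{*}(x)$: manufacturing a genuinely good model (simple \emph{and} small) out of the mere fact that $x$ is typical in $A$. The useful numerical input is $\C(x\cnd A)=\log|A|-d(x\cnd A)$, so $x$ sits ``deep'' in $A$ and its index there is almost incompressible; I would use $A$ only to fix the size scale $s=\log|A|$, which costs $O(\log n)$ bits, and then search the enumeration for a low-complexity set of size at most $2^{s}$ containing $x$. The equal-length hypothesis makes the bookkeeping clean, since at scale $s$ every set consists of $n$-bit strings and the union of the first $2^{t}$ of them covers at most $2^{t+s}$ strings. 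The difficulty is that this counting only yields lower bounds on complexities, so I cannot merely count: I must actually exhibit a set $B\ni x$ of size $\le 2^{s}$ whose own complexity is at most $\C(x)-\C(x\cnd A)+O(\log(\C(A)+n))$, from which $\beta^{*}(x)\le \C(B)+s\le \C(x)+d(x\cnd A)+O(\log(\C(A)+n))$ would follow. I expect to reach this by symmetry of information, rewriting the target budget $\C(x)-\C(x\cnd A)$ as $\C(A)-\C(A\cnd x)+O(\log(\C(A)+n))$, and then selecting $B$ from the enumeration so that the number of steps needed to locate a suitable set containing $x$ is controlled by this quantity. Keeping that selection honest—incurring only an additive $O(\log(\C(A)+n))$ loss rather than a loss proportional to $\C(A)$—is the crux of the whole argument.
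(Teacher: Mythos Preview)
Part (a) is correct and standard: unwinding the definitions reduces it to $\C(x)\le\C(A)+\C(x\cnd A)+O(\log(\C(A)+n))$, i.e.\ ordinary subadditivity, which is exactly how the paper handles the analogue. For (b), your reduction via symmetry of information to the target ``find $B\in\mathcal A$ with $x\in B$, $|B|\le 2|A|$, and $\C(B)\le\C(A)-\C(A\cnd x)$'' is also the paper's first move. (The paragraph on $W_\beta$ and the Kraft-type estimate is a detour: it only reproves $\delta\ge 0$, which the theorem does not need.)

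The gap is exactly where you yourself flag ``the crux'': you have no mechanism for producing $B$, and the phrase ``selecting $B$ from the enumeration so that the number of steps needed to locate a suitable set containing $x$ is controlled by this quantity'' is not yet an argument --- no single natural enumeration has that property. The paper supplies the missing idea in two further steps. First, a \emph{reverse} count: given $x$, enumerate all $A'\in\mathcal A$ with $x\in A'$, $\C(A')\le\C(A)$, $|A'|\le 2|A|$; since $A$ itself is described (given $x$) by its ordinal in that list, the list has length at least $2^{\C(A\cnd x)}$. Second --- and this is what the paper calls ``the main part of the proof'' --- the combinatorial lemma: if at least $2^k$ sets in $\mathcal A$ of complexity $\le i$ and size $\le 2^j$ all contain $x$, then one of them has complexity $\le i-k+O(\log)$. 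This is obtained by a probabilistic covering argument: include each candidate independently with probability about $2^{-k}$; with positive probability the resulting subfamily has size $\le 2^{i-k+O(1)}$ \emph{and} still meets every string that was $2^k$-covered; the lexicographically first subfamily with both properties has complexity $O(\log(i+j+n))$ (only the parameters are needed to find it), and $B$ is then described by its index inside it. Applying this with $i=\C(A)$, $j=\lceil\log|A|\rceil$, $k=\C(A\cnd x)$ gives the desired $B$. Symmetry of information only sets the budget; it is this covering lemma that actually delivers $B$, and without it your plan does not close.
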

	In our paper we will consider families  with the following properties:
	\begin{itemize}
		\item Every set from $\mathcal{A}$ consists of strings of the same length. The family of all subsets of $\{0,1\}^n$ that belong to  $\mathcal{A}$ is denoted by $\mathcal{A}_n$.
		\item There exists a polynomial $p$ such that $|\mathcal{A}_n| \le 2^{p(n)}$ for every $n$. 
		\item There exists an algorithm enumerating all sets from $\mathcal{A}_n$ in space $\poly(n)$.
	\end{itemize}
	The last requirement means the following. There exists an indexing of $\mathcal{A}_n$ and a Turing machine $M$  that for a pair of integers $(n; i)$ and a string $x$ in the input outputs $1$ if $x$ belongs to $i$-th set of $\mathcal{A}_n$ and $0$ otherwise. On every such input $M$ uses at most $\poly(n)$ space. 
	
	Any family of finite sets of strings
	that satisfies these three conditions is called \emph{acceptable}. For example, the family of all Hamming balls is  acceptable.
	Our main result is the following analogue of Theorem \ref{without_bound}.
	\begin{theorem}
		\label{def}
		\textup{(a)}
		There exist a polynomial $p$ and a constant $c$ such that for every set $A \ni x$ and for every $m$ the following inequality holds
		$$d^m(x \cnd A) \le \delta^{m,p}(x,A) + c\log ( C^m(A)).$$
		Here $p=p(m+n)$ and  $n$ is the length of $x$.
		
		\textup{(b)}
		For every acceptable family of sets $\mathcal{A}$ there exists a polynomial $p$ such that the following property holds.  For every $A \in \mathcal{A}$, for every $x \in A$ and for every integer $m$ there exists a set $B \ni x$ from $\mathcal{A}$ such that
		\begin{itemize}
			\item
			$
			\begin{array}{l}
			\log|B| \le \log|A| + 1;
			\end{array}
			$
			\item
			$
			\begin{array}{l}
			\CD ^s(B) \le \CD ^m(A) - \CD ^s(A \cnd x) + O(\log (n + m)).
			\end{array}
			$
		\end{itemize}
		Here $s = p(m + n)$ and $n$ is the length of $x$.
	\end{theorem}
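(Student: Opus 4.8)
Rewriting the two deficiencies and cancelling the common term $\log|A|$, the claimed inequality is equivalent to the space-bounded subadditivity bound
$$\CD^{p}(x) \le \CD^m(A) + \CD^m(x \cnd A) + c\log \CD^m(A).$$
The plan is to prove this by inlining the oracle. I fix a membership tester $q$ for $A$ of length $\CD^m(A)$ running in space $m$, and a program $\pi$ distinguishing $x$ with oracle $A$ of length $\CD^m(x\cnd A)$ running in space $m$, and build a single program $\rho$ that distinguishes $x$ with no oracle: on input $y$ it simulates $\pi(y)$ and answers every oracle query ``$z\in A$?'' by running $q(z)$. Since $q$ decides membership in $A$ correctly, $\rho$ behaves exactly like $\pi$ with the true oracle, hence distinguishes $x$. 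Concatenating $q$ and $\pi$ with a self-delimiting prefix for $|q|$ costs $\CD^m(A)+\CD^m(x\cnd A)+O(\log\CD^m(A))$ bits, and the simulation reuses space, so $\rho$ runs in space $\poly(m+n)$. This yields the bound with $p=p(m+n)$ and an absolute constant $c$.

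\textbf{Part (b): reduction.} Write $a=\CD^m(A)$, $j=\CD^s(A\cnd x)$ and $\ell=\log|A|$. Since $|\mathcal{A}_n|\le 2^{\poly(n)}$, every member of $\mathcal{A}_n$ has a tester of length $\poly(n)$, so $a\le\poly(n)$ and all the parameters below occupy only $O(\log(n+m))$ bits. Call a set \emph{eligible} if it lies in $\mathcal{A}_n$, has size at most $2^{\ell+1}$, and admits a membership tester of length at most $a$ in space $m$; there are at most $2^{a+1}$ eligible sets, and $A$ is one of them. It suffices to exhibit an eligible $B\ni x$ whose \emph{unconditional} complexity is at most $a-j+O(\log(n+m))$: such a $B$ has $\log|B|\le\ell+1$ and $\CD^s(B)\le\CD^m(B)\le a-j+O(\log(n+m))$, which is exactly the required inequality. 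The key structural observation is that $x$ lies in \emph{many} eligible sets: given $x$ one can enumerate all eligible sets containing $x$ in polynomial space, and $A$ is among them, so $j\le\log M_x+O(\log(n+m))$, where $M_x$ is their number; hence $M_x\ge 2^{j}/\poly(n+m)$.

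\textbf{Part (b): covering via the Nisan--Wigderson generator.} Call a point \emph{popular} if it belongs to at least $2^{j}/\poly(n+m)$ eligible sets; by the previous paragraph $x$ is popular. A double count of incidences ($\le 2^{a+1}$ eligible sets, each of size $\le 2^{\ell+1}$) shows there are at most $2^{a+\ell-j}\poly(n+m)$ popular points, and since each is covered by a $2^{j-a}/\poly(n+m)$ fraction of the eligible sets, a random family of $2^{a-j}\poly(n+m)$ eligible sets covers all of them with positive probability. I then let $B$ be the first set of such a family that contains $x$; described by its index in the family this costs $\log\big(2^{a-j}\poly(n+m)\big)=a-j+O(\log(n+m))$ bits. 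A genuinely random family is useless here, since its complexity is $\approx 2^{a-j}$ and it cannot be produced in small space, so I invoke the Nisan--Wigderson generator: I generate the family from a short seed and use pseudorandomness to argue that some seed is valid, where ``every generated set is eligible'' and ``the generated family covers every popular point'' are predicates checkable in polynomial space (sizes and tester lengths of members are computed exactly by exhaustive, exponential-time but poly-space, counting, and popularity is checked the same way). Describing the family as ``the first valid seed for parameters $a,\ell,j,n,m$'' then costs only $O(\log(n+m))$ bits and is recomputable in polynomial space by searching the quasipolynomially many seeds; the membership tester for $B$ recomputes this seed, regenerates the family, and runs the stored index's tester, all in space $s=p(m+n)$.

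\textbf{Main obstacle.} The step I expect to be hardest is exactly this derandomized covering. In the unbounded setting of Theorem~\ref{without_bound} one may use a greedy cover, which is trivial to describe but must remember the exponentially large set of still-uncovered points; this is unavailable in polynomial space, and a random cover is unaffordable both to describe and to search over. Making the Nisan--Wigderson generator yield, from a seed whose \emph{description} costs only $O(\log(n+m))$, a poly-space-computable family that provably hits the popular point $x$ --- with its fooling error matched to the density $2^{j-a}$ and its output size to the $2^{a-j}$ buckets, so that the total overhead stays logarithmic --- is the delicate quantitative core of the argument.
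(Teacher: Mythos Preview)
Your part (a) is correct and matches the paper exactly: unfold the definitions and inline the oracle by pairing a distinguisher for $A$ with the $A$-oracle distinguisher for $x$.

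For part (b), your reduction (eligible sets, the bound $M_x\ge 2^{j}/\poly(n+m)$, and the plan to find a small covering family and describe $B$ by its index in it) is the same as the paper's reduction to its main Lemma. The gap is in the derandomization step. You write that the validity of a seed is ``checkable in polynomial space'' and then invoke the Nisan--Wigderson generator; but the unconditional NW generator used here (Theorem~\ref{tnw}) fools only \emph{constant-depth} circuits of size $2^{O(n)}$, not polynomial-space machines. That a predicate is in PSPACE says nothing about whether NW fools it, so as written your argument does not establish that a good seed exists.

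The paper faces exactly this obstacle and resolves it by showing that the two required properties can be tested in $\mathrm{AC}^0$. The coverage property (your ``the generated family covers every popular point'') is a depth-$2$ circuit: a conjunction over the $\le 2^n$ popular strings of disjunctions over the input bits. The cardinality property, however, is a symmetric threshold on $2^{i+1}$ bits and is \emph{not} in constant depth (Furst--Saxe--Sipser). The paper therefore replaces ``$|\mathcal{B}|$ small'' by the bucketed property~$(1)^*$ --- split the eligible sets into $2^{i-k}$ buckets of size $2^k$ and require at most $(n+k)^2=\mathrm{polylog}$ selections per bucket --- and appeals to Ajtai's approximate-counting theorem to put each per-bucket test in $\mathrm{AC}^0$. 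A separate probabilistic lemma (Lemma~\ref{prob1}) shows that a truly random subfamily satisfies $(1)^*$ and the coverage property simultaneously with probability $\ge 1/3$, so NW against the resulting $\mathrm{AC}^0$ tester guarantees a good seed. This $\mathrm{AC}^0$ reformulation via Ajtai is precisely the missing idea in your sketch; your ``Main obstacle'' paragraph correctly flags the step as delicate, but the resolution is about circuit depth, not about matching the fooling error to the density.

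A minor point: the tester for $B$ you describe runs in space $s=p(m+n)$, not $m$, so you get $\CD^s(B)\le a-j+O(\log(n+m))$ directly; the intermediate inequality $\CD^s(B)\le\CD^m(B)$ is not what your construction gives (and is not needed).
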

	
	A skeptical reader would say that an analogue of Theorem~\ref{without_bound} \textup{(b)} should has the following form (and we completely agree with him/her).
	\begin{hypothesis}
		\label{hyp}
		There exist a polynomial $p$ and a constant $c$ such that for every set
		$A \ni x$ from $\mathcal{A}$ and for every $m$ there exists a set $B \in \mathcal{A}$ such that
		$$\delta^{p,m}(x, B) \le d^p(x \cnd A) + c\log (n + m).$$
		Here $p = p(m + n)$,  $n$ is the length of $x$ and $\mathcal{A}$ is an acceptable family of sets.
	\end{hypothesis}
	We argue in Subsection~\ref{sketch} why
	Theorem~\ref{def}~\textup{(b)} is close to Hypothesis \ref{hyp}. 
	\section{Proof of Theorem \ref{def}}
	\begin{proof}[of Theorem \ref{def}\textup{(a)}]
		The inequality we have to prove means the following
		$$\CD ^p(x) \le \CD ^m(x \cnd A) + \CD ^m(A) + c \log( \CD ^m(A) + n)$$
		(by the definitions of optimality and randomness deficiencies).
		
		Consider a program $p$ of length  $\CD ^m(x \cnd A)$
		that distinguishes $x$ and uses $A$ as an oracle.
		We need to construct a program that also distinguishes $x$ but does not use any oracle. For this add to $p$ a procedure distinguishing $A$. There exists such a procedure of length $\CD ^m(A)$. So, we get a program of the  length that we want (additional $ O(\log( \CD ^m(A)))$ bits are used for pair coding) that uses  $\poly(m)$ space.
	\end{proof}
	
	So, for every $x$ and $A \ni x$ the randomness deficiency is not greater than the optimal deficiency. The following example shows that the difference can be large. 
	\begin{example}
		Consider an incompressible string $x$ of  length $n$, so $\C (x) = n$ (this equality as well as further ones holds with logarithmic precision). Let $y$ be $n$-bit string that is also incompressible and independent of  $x$, i.e. $\C (y \cnd x) = n$.
		By symmetry of information (see \cite{suv,lv}) we get $\C (x \cnd y) = n$.  
		
		Define $A:=\{0,1\}^n \setminus \{y\}$.
		The randomness deficiency of $x$ in $A$ (without  resource restrictions) is equal to $0$. Hence, this is true for any resource restrictions ($\C (x \cnd A)$  is not greater than $\CD ^m(x \cnd A) $ for every $m$).   
		Hence, for any $m$ we have $d^m(x \cnd A) = 0$. On the other hand  $\delta^p_m(x, A) =n$ for all $p$ and large enough $m$. Indeed, take $m = \poly(n)$ such that $\CD ^m(x) = n$. Since $\C(A) =n$ we have $\CD ^q(A) =n$ for every~$q$.
	\end{example}
	So, we can not just let $A=B$ in Hypothesis~\ref{hyp}. In some cases we have to `improve' $A$ (in the example above we can take  $\{0,1\}^n$ as an improved set). 
	\subsection{Sketch of proof of Theorem \ref{without_bound}\textup{(b)}}
	\label{sketch}
	The proof of Theorem~\ref{def}~\textup{(b)} is similar to the proof of Theorem~\ref{without_bound}~\textup{(b)}.  Therefore we  present the sketch of the proof of Theorem~\ref{without_bound}~\textup{(b)}. 
	
	Theorem~\ref{without_bound} states that  there exists a set $B \in \mathcal{A}$ containing $x$ such that $\delta(x \cnd B) \le d(x, A)$.
	(Here and later we omit terms of logarithmic order.)
	First we derive it from the following statement.
	
	(1) There exists a set $B \in \mathcal{A}$ containing $x$ such that
	
	$|B| 
	\le 2 \cdot |A|$ and $\C (B) \le \C (A) - \C (A \cnd x)$.
	
	For such $B$ the 
	$\delta(x \cnd B) \le d(x, A)$ easily follows from the inequality
	$\C (A) - \C (A \cnd x) - \C (x) \le - \C (x \cnd A) $. The latter inequality holds by symmetry of information. 
	
	To prove (1) note that
	
	(2) there exist at least $2^{\C (A \cnd x)}$ sets in  $\mathcal{A}$ containing $x$ whose complexity and size are at most $\C (A)$ and 
	$2 \cdot |A|$, respectively.
	
	Indeed, knowing $x$ we can enumerate all sets from $\mathcal{A}$ containing $x$ whose parameters (complexity and size) are not worse than the parameters of $A$. Since we can describe $A$ by its ordinal number in this enumeration we conclude that the length of this number is at least $\C (A \cnd x)$ (with logarithmic precision).
	
	Now (1) follows from the following statement.
	
	(3) Assume that $\mathcal{A}$ contains at least $2^k$ sets of complexity at most $i$ and size at most $2^j$ containing $x$. Then one of them has complexity at most $i-k$.
	
	(We will apply it to $ i=\C (A)$, $j=\lceil \log|A| \rceil$ and $k=\C (A \cnd x)$.)
	
	So, Theorem \ref{def} (\textup{b}) is an analogue of (2). Despite there is an analogue of symmetry of information for space-bounded algorithms (see \cite{longpre} and Appendix) Hypothesis \ref{hyp} does not follow  Theorem \ref{def} (\textup{b}) directly. (There is some problem with quantifiers.)
	
	Proof of (3) is the main part of the proof of Theorem~\ref{without_bound}, the same thing holds for Theorem~\ref{def}.
	
	In the next subsection we derive Theorem \ref{def} (\textup{b}) from Lemma \ref{nw} (this is an analogue of the third statement). In the proof of Lemma~\ref{nw} we use the Nisan-Wigderson generator.
	\subsection{Main lemma}
	We will derive  Theorem \ref{def} (\textup{b}) from the following 
	\begin{lemma}
		\label{nw}
		For every acceptable family of sets $\mathcal{A}$ there exist a polynomial $p$ and a constant $c$ such that the following statement holds for every $j$.
		
		Assume that a string $x$ of length $n$ belongs to $2^k$ sets from $\mathcal{A}_n$. Assume also that every of these sets has cardinality at most $2^j$ and space-bounded by $m$ complexity at most $i$.  Then one of this set is space-bounded by $M$ complexity at most  $i - k + c \log (n + m)$. Here $M = m + p(n)$.
	\end{lemma}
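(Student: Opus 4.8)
The plan is to prove the space-bounded covering statement that plays, for Lemma~\ref{nw}, the role that statement (3) plays in the sketch of Theorem~\ref{without_bound}. Fix $n,m,i,j,k$ and let $\mathcal{F}$ be the family of all sets $S\in\mathcal{A}_n$ with $|S|\le 2^j$ and $\CD^m(S)\le i$; then $|\mathcal{F}|\le 2^{i+1}$, since each such $S$ is distinguished by a program of length at most $i$. Call $y\in\{0,1\}^n$ \emph{heavy} if it belongs to at least $2^k$ members of $\mathcal{F}$; there are at most $2^n$ heavy strings, and the given $x$ is heavy by hypothesis. First I would record the purely probabilistic fact: if we pick indices $r_1,\dots,r_t$ into $\mathcal{F}$ independently and uniformly (writing $S_r$ for the $r$-th member of $\mathcal{F}$) with $t:=\lceil 2^{i+1}(n+2)/2^k\rceil$, then for a fixed heavy $y$ the probability that none of $S_{r_1},\dots,S_{r_t}$ contains $y$ is at most $(1-2^k/|\mathcal{F}|)^t<2^{-n-1}$, so by a union bound over the at most $2^n$ heavy strings the sampled sets cover all heavy strings with positive probability. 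Note that $\log t=i-k+O(\log n)$, which is exactly the size budget we are aiming for.

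The core of the argument is to replace this random choice by a pseudorandom one generated from a short seed, so that the resulting covering family is reconstructible in small space. For a fixed heavy $y$ consider the test $T_y$ that reads a sequence of $t$ indices (each an element of $\{0,1\}^{\lceil\log|\mathcal{F}|\rceil}$) left to right, maintaining a single bit ``$y$ already covered'', and accepts iff that bit is still $0$ at the end. As a function of the index sequence, $T_y$ is computed by an ordered branching program of width $2$ and length $t$ over the index alphabet; the transition ``does $S_{r_\ell}$ contain $y$'' is expensive to evaluate but does not change the combinatorial type of the program. I would feed this class of tests to the Nisan--Wigderson generator $G$, choosing its parameters so that it fools every such $T_y$ with error $\varepsilon<2^{-n-1}$. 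Then $\Pr_\sigma[T_y(G(\sigma))=1]\le 2^{-n-1}+\varepsilon$ for each heavy $y$, and summing over the at most $2^n$ heavy strings shows that for some seed $\sigma^\ast$ the pseudorandom sequence $G(\sigma^\ast)$ covers all heavy strings. In particular one of its sets, call it $B$, contains $x$; and $B\in\mathcal{F}$, so $|B|\le 2^j$.

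It remains to bound $\CD^M(B)$. Describe $B$ by the parameters $n,i,j,k$ and the value of $m$ (together $O(\log(n+m))$ bits) and by the position $\ell\in\{1,\dots,t\}$ of $B$ in the pseudorandom sequence ($\log t=i-k+O(\log n)$ bits). The decoder recovers $B$ as follows: it searches for the lexicographically first seed $\sigma$ such that $G(\sigma)$ covers every heavy string, and then outputs the set indexed by the $\ell$-th symbol of $G(\sigma)$. Heaviness of a string, membership in a given set of $\mathcal{A}_n$, and the predicate $\CD^m(\cdot)\le i$ are all decidable in space $m+\poly(n)$ (the last by simulating programs of length at most $i$ within space $m$, reusing the same work tape), and the coverage check regenerates each of the $t$ symbols of $G(\sigma)$ on the fly rather than storing them. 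Hence the whole decoding runs in space $M=m+p(n)$, and $\CD^M(B)\le\log t+O(\log(n+m))=i-k+O(\log(n+m))$, as required.

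The step I expect to be the real obstacle is this last one: keeping the decoding space additive in $m$. The generator must have a seed short enough, and be such that finding a good seed and evaluating the coverage test both stay within $m+p(n)$ space; this is precisely why the combinatorial (width-$2$, length-$t$) description of $T_y$ matters, since it should let the seed length be governed by these parameters and by the hardness needed only against the space-bounded coverage test, rather than by the exponential ``bit-level'' width $2^{m}$ coming from the membership transitions. Making this accounting close --- equivalently, exhibiting a Nisan--Wigderson-style generator whose good seed can be located in space $m+p(n)$ while fooling all the tests $T_y$ --- is the technical heart of the lemma; the probabilistic covering and the union bound above are routine once it is in place.
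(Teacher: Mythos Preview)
Your overall plan is sound and in one respect cleaner than the paper's: by sampling a fixed-length sequence of $t=O(n\cdot 2^{i-k})$ indices rather than including each set independently with probability $\approx n2^{-k}$, the covering family has size at most $t$ by construction, so you never need to verify a cardinality bound. The paper's independent-sampling route forces an extra property (their $(1)^*$: at most $(n+k)^2$ selected sets in each of $2^{i-k}$ buckets), which then has to be checked by a constant-depth circuit via Ajtai's approximate-counting theorem (Theorem~\ref{porog}). Your index-sampling sidesteps that detour entirely; only the coverage condition survives, and that is already a depth-$2$ circuit.

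Where the write-up is muddled is the generator discussion. You frame $T_y$ as a width-$2$ branching program and worry about a ``bit-level width $2^{m}$'' coming from the membership transitions; this conflates Nisan's space-bounded PRG with the Nisan--Wigderson generator and misplaces the role of $m$. The PRG only has to fool $T_y$ \emph{as a Boolean function of the random index bits}; how much workspace the decoder later spends evaluating ``$y\in S_{r_\ell}$'' is invisible to the distinguisher. In those terms $T_y$ is just an AND of $t$ predicates on disjoint $(i{+}1)$-bit blocks, i.e.\ a depth-$3$ circuit of size $t\cdot 2^{O(i)}$---precisely the regime of Theorem~\ref{tnw}. Pad the input to length $2^{i+n}$ (exactly as the paper does) so that both the circuit-size hypothesis and the $2^{-n-1}$ error fit; then the seed has length $\poly(n)$, the lexicographically first good seed is found in space $m+\poly(n)$ by recomputing heaviness and membership with a reused work tape, and your description of $B$ has length $\log t+O(\log(n+m))=i-k+O(\log(n+m))$. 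Once you recognise that $T_y$ is an $AC^0$ circuit, the ``obstacle'' you flag in the last paragraph disappears.
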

	\begin{proof}[ Theorem \ref{def} (\textup{b}) from Lemma~\ref{nw}]
		
		Denote by $\mathcal{A}'$  the family of all sets
		in $\mathcal{A}_n$ containing $x$ whose parameters are not worse than those of $A$.
		$$\mathcal{A}':= \{ A' \in \mathcal{A}_n \mid 
		x \in A', \CD ^m(A) \le \CD ^m(A'), \log |A'| \le \lfloor \log |A|\rfloor  \}. $$
		Let $k = \log \mathcal{A}'$.
		
		We will describe $A$ in $k + O(\log(n + m))$ bits when $x$ is known.
		The sets in $\mathcal{A}'$ (more specifically, their programs) can be enumerated if $n,m$ and  $\log|A|$ are known. This enumeration can be done in space $\poly(m + n)$. We can describe $A$ by its ordinal number of this enumeration, so
		$$ \CD ^s(A \cnd x) \le k + O(\log (n + m) ).$$
		Here $s = \poly(m + n)$.
		
		Theorem \ref{def} (\textup{b}) follows from Lemma~\ref{nw} for
		$i = \CD ^m(A)$ and $j = \lfloor \log |A|\rfloor$.
	\end{proof}
	
	\subsection{Nisan-Wigderson generator. Proof of the main lemma }
	Define $$\mathcal{A}_{n,m}^{i,j}:= \{ A' \in \mathcal{A}_n \mid \CD ^m(A') \le i, \log|A'| \le j \}$$ for an acceptable family of sets $\mathcal{A}$.
	
	Define a probability distribution $\mathcal{B}$ as follows. Every set from $\mathcal{A}_{n,m}^{i,j}$ belongs to $\mathcal{B}$ with probability  $2^{-k}(n +2) \ln 2$ independently. 
	
	We claim that  $\mathcal{B}$ satisfies the following two properties with high probability.
	
	\textup{(1)} The cardinality of $\mathcal{B}$ is at most  $2^{i - k + 2} \cdot (n+k)^2 \ln 2$.
	
	\textup{(2)}
	If a string of length $n$ is  contained in at least $2^k$ sets from $\mathcal{A}_{n,m}^{i,j}$ then one of these sets belongs to $\mathcal{B}$.
	
	\begin{lemma}
		\label{prob}
		The family 
		$\mathcal{B}$ satisfies the properties \textup{(1)} and \textup{(2)} with probability at least~$\frac{1}{2}$.  
	\end{lemma}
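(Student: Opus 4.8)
The plan is to treat the two properties separately, bound the probability that each fails, and then combine the estimates by a union bound: it suffices to show that $\mathcal{B}$ violates \textup{(1)} with probability less than $\tfrac14$ and violates \textup{(2)} with probability at most $\tfrac14$, so that both hold simultaneously with probability at least $\tfrac12$. Throughout I would write $q := 2^{-k}(n+2)\ln 2$ for the inclusion probability of a single set, recalling that each set of $\mathcal{A}_{n,m}^{i,j}$ is put into $\mathcal{B}$ independently with probability $q$.

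For property \textup{(1)} I would use a first-moment (Markov) argument. The number of sets in $\mathcal{A}_{n,m}^{i,j}$ is at most the number of programs of length at most $i$, hence at most $2^{i+1}$. Therefore the expectation $\mathbb{E}|\mathcal{B}|$ is at most $q\cdot 2^{i+1} = 2^{i-k+1}(n+2)\ln 2$. Applying Markov's inequality with the threshold $2^{i-k+2}(n+k)^2\ln 2$ appearing in \textup{(1)} gives a failure probability at most $\frac{n+2}{2(n+k)^2}$, which is below $\tfrac14$ in the relevant range of parameters. The deliberately loose factor $(n+k)^2$ in the statement of \textup{(1)} is precisely what leaves this slack.

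Property \textup{(2)} is where the union bound over all strings of length $n$ enters, and this is what forces the particular shape of $q$. I would fix a string $y$ of length $n$ lying in at least $2^k$ sets of $\mathcal{A}_{n,m}^{i,j}$. Since these ($\ge 2^k$) sets are selected independently, the probability that none of them lands in $\mathcal{B}$ is at most $(1-q)^{2^k}\le e^{-q\,2^k}=e^{-(n+2)\ln 2}=2^{-(n+2)}$, using $1-q\le e^{-q}$. As there are at most $2^n$ strings of length $n$, the union bound shows that \textup{(2)} fails for some such string with probability at most $2^n\cdot 2^{-(n+2)}=\tfrac14$.

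The two estimates then combine to give total failure probability below $\tfrac12$, proving the lemma. I expect the only genuinely delicate point to be the calibration of the inclusion probability $q$: it must be large enough that every heavily covered string is hit (property \textup{(2)} needs $q\,2^k$ to exceed $(n+2)\ln 2$, so as to beat the $2^n$ union bound), yet small enough that $\mathbb{E}|\mathcal{B}|$ stays near $2^{i-k}$ (property \textup{(1)}). The choice $q=2^{-k}(n+2)\ln 2$ is exactly the value balancing these two competing demands, and the main remaining effort is the bookkeeping needed to check that the resulting constants fit inside the stated bounds and to dispose of the degenerate cases where $q\ge 1$.
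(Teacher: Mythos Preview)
Your proposal is correct and follows essentially the same route as the paper: Markov's inequality for property~\textup{(1)} and the estimate $(1-q)^{2^k}\le e^{-q\,2^k}=2^{-(n+2)}$ followed by a union bound over the $2^n$ strings for property~\textup{(2)}, each yielding failure probability at most~$\tfrac14$. The paper is in fact terser than you are---it simply says the cardinality exceeds four times its expectation with probability less than~$\tfrac14$, without writing out the ratio $\frac{n+2}{2(n+k)^2}$ or worrying about the degenerate case $q\ge 1$.
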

	
	\begin{proof}
		Show that $\mathcal{B}$ satisfies every of these two properties with probability at least $\frac{3}{4}$.
		
		For \textup{(1)} it follows from Markov's inequality: the cardinality of $\mathcal{B}$ exceeds the expectation by a factor of $4$ with probability less than $\frac{1}{4}$. (Of course we can get a rather more stronger estimation.)
		
		To prove it for \textup{(2)} consider a string of length $n$ that belongs to at least $2^k$ sets from $\mathcal{A}_{n,m}^{i,j}$. The probability of the event [every of these $2^k$ sets does not belong to $\mathcal{B}$] is at most 
		$$(1 - 2^{-k}(n+2)\ln 2)^{2^k}\le 2^{-n-2} \text{ (since }1 - x \le e^{-x}).$$
		The probability of the sum of such events for all strings of  length $n$ is at most $2^n 2^{-n-2} = \frac{1}{4}$.
	\end{proof}
	
	Using  Lemma \ref{prob} we can prove existence of a required set whose  \emph{unbounded} complexity is at most $i - k + O(\log (n + m))$. Indeed, by  Lemma~\ref{prob} \emph{there exists} a subfamily that satisfies the properties \textup{(1)} and \textup{(2)}. The lexicographically first such family has  small complexity---we need only know $i$, $k$, $n$ and $m$ to describe it. Note, that $k$ and $i$ are bounded by $\poly(n)$: since $\mathcal{A}$ is acceptable $\log|\mathcal{A}_n|=\poly(n)$ and hence $k$ is not greater than $\poly(n)$. We can enumerate all sets from $\mathcal{A}_n$, so space-bounded complexity of every element of  
	$\mathcal{A}_n$ (in particular, $i$) is bounded by polynomial in $n$.
	Now we can describe a required set as the ordinal number of an enumeration of this subfamily.
	
	However, this method is not suitable for the polynomial space-bounded complexity: the brute-force search for the finding a suitable subfamily uses too much space (exponential). To reduce it we will use the Nisan-Wigderson generator. The same idea was used in \cite{musatov}.
	\begin{theorem}[\cite{nis,nw}]
		\label{tnw}
		For every constant $d$ and for every positive polynomial $q(m)$ there exists a sequence of functions $G_m: \{0,1\}^f \to \{0,1\}^m$ where $f = O(
		\log^{2d+6} m)$ such that:
		\begin{itemize}
			\item
			Function $G_m$ is computable in space $\poly(f)$;
			\item
			For every family of circuits $C_n$ of size $q(v)$ and depth $d$ and for large enough $n$ it holds that:
			$$|\Pr_x[C_m(G_m(x)) = 1] - \Pr_y[C_m(y) = 1]| < \frac{1}{m},$$ 
			where $x$ is distributed uniformly in $\{0,1\}^f$, and  $y$ is distributed uniformly in $\{0,1\}^m$.
		\end{itemize} 
	\end{theorem}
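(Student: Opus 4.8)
The plan is to prove Theorem~\ref{tnw} by the standard Nisan--Wigderson paradigm: combine a combinatorial design with a Boolean function that is hard to approximate by constant-depth circuits, and then analyze the generator through a hybrid (next-bit predictor) argument. The hard function will be parity, $\mathrm{PAR}_\ell(w)=w_1\oplus\cdots\oplus w_\ell$, for which H{\aa}stad's switching lemma supplies the key external ingredient: every depth-$d'$ circuit of size $S\le 2^{\ell^{1/(d'-1)}}$ agrees with $\mathrm{PAR}_\ell$ on at most a $\tfrac12+2^{-\Omega(\ell^{1/(d'-1)})}$ fraction of inputs. All of the genuine hardness in the argument is imported from this correlation bound; everything else is construction and bookkeeping.

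First I would build the design. Fixing a prime power $q$ with $\ell\le q\le 2\ell$, take the universe $[f]=\mathbb{F}_q\times\mathbb{F}_q$, so $f=q^2=\Theta(\ell^2)$, and for each polynomial $\pi$ over $\mathbb{F}_q$ of degree $<a$ set $S_\pi=\{(z,\pi(z)):z\in\mathbb{F}_q\}$. Then $|S_\pi|=q\ge\ell$, and since two distinct polynomials of degree $<a$ agree in fewer than $a$ points, $|S_\pi\cap S_{\pi'}|<a$. There are $q^a\ge\ell^a$ such sets, so choosing $a=\lceil\log m/\log\ell\rceil$ yields at least $m$ of them; label the first $m$ as $S_1,\dots,S_m$ (truncated to exactly $\ell$ coordinates). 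Finally set $\ell=\Theta(\log^{d+3}m)$, which gives the required seed length $f=\Theta(\ell^2)=O(\log^{2d+6}m)$ and $a=\Theta(\log m/\log\log m)$, and define the generator by $G_m(z)_i=\mathrm{PAR}_\ell\!\left(z|_{S_i}\right)$ for $z\in\{0,1\}^f$. The space bound is then routine: to produce the $i$-th output bit one enumerates the $i$-th degree-$<a$ polynomial, evaluates it over $\mathbb{F}_q$ to recover $S_i$, and XORs the corresponding seed bits; field arithmetic with $q=O(\ell)=\poly(f)$ and the parity computation run in $\poly(\log m)=\poly(f)$ space, so $G_m$ is computable in space $\poly(f)$.

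The heart of the argument, and the step I expect to be the main obstacle, is indistinguishability, because it requires balancing the depth/size blow-up of the reconstruction against the H{\aa}stad threshold. Suppose some depth-$d$ circuit $C_m$ of size $q(m)$ distinguishes $G_m(U_f)$ from $U_m$ with advantage $\ge 1/m$. A hybrid argument over the $m$ output coordinates produces an index $i$ and a fixing of the seed bits outside $S_i$ such that, viewed as a function of the $\ell$ bits $z|_{S_i}$, the circuit $C_m$ predicts $\mathrm{PAR}_\ell(z|_{S_i})$ with advantage at least $1/m^2$. The key structural point is that each earlier output bit $G_m(z)_j=\mathrm{PAR}_\ell(z|_{S_j})$ with $j<i$ now depends on $z|_{S_i}$ only through the at most $a$ coordinates in $S_i\cap S_j$, hence is an $a$-junta computable by a depth-$2$ formula of size $2^a$. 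Wiring these $m$ formulas into the inputs of $C_m$ yields a single circuit of depth $d+2$ and size $q(m)+m\,2^a=\poly(m)$ that agrees with $\mathrm{PAR}_\ell$ on a $\tfrac12+1/m^2$ fraction of inputs.

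To close the argument I would verify the parameter balance. Since $\ell=\Theta(\log^{d+3}m)$ gives $\ell^{1/(d+1)}=\omega(\log m)$, for large $m$ both $1/m^2>2^{-\Omega(\ell^{1/(d+1)})}$ and $\poly(m)<2^{\ell^{1/(d+1)}}$ hold, which contradicts H{\aa}stad's correlation bound applied to the reconstructed depth-$(d+2)$ circuit. Hence no such $C_m$ exists and the distinguishing advantage stays below $1/m$, as claimed. The only delicate point is confirming that the exponent $d+3$ in $\ell$ (equivalently $2d+6$ in the seed length) leaves enough slack that the size-$\poly(m)$, depth-$(d+2)$ reconstruction still falls under the hardness threshold for parity on $\ell$ bits; the design and parity parameters above are chosen precisely to guarantee this margin, and indeed the same choice absorbs the off-by-one ambiguity in the reconstructed depth ($d+O(1)$ rather than exactly $d+2$).
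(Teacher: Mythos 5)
This statement is not proved in the paper at all: Theorem~\ref{tnw} is imported as a black box, cited to Nisan~\cite{nis} and Nisan--Wigderson~\cite{nw}, and the paper only \emph{uses} it (instantiated at $m = 2^{i+n}$) inside the proof of Lemma~\ref{nw}. So there is no in-paper proof to compare against; what you have written is a reconstruction of the standard argument from those references, and it is essentially correct. Your three ingredients --- the polynomial-based design ($\ell$-sized sets over $\mathbb{F}_q\times\mathbb{F}_q$ with pairwise intersections below $a$), parity as the hard function with H{\aa}stad's correlation bound, and the hybrid/next-bit-predictor reconstruction that turns a distinguisher into a depth-$(d+2)$, size-$\poly(m)$ circuit correlating with $\mathrm{PAR}_\ell$ --- are exactly the components of Nisan's proof, and your parameter bookkeeping is sound: $\ell=\Theta(\log^{d+3}m)$ gives $f=\Theta(\ell^2)=O(\log^{2d+6}m)$ matching the stated seed length, $2^a=m^{o(1)}$ keeps the reconstruction polynomial-size, and $(d+3)/(d+1)>1$ provides the slack that makes $1/m^2$ exceed the H{\aa}stad correlation threshold for every fixed polynomial $q$ once $m$ is large. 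The remaining sketched steps (Yao's distinguisher-to-predictor conversion with the averaging that fixes the seed bits outside $S_i$, and the precise constants in the switching-lemma correlation bound) are standard and fill in without surprises; the one point worth making explicit in a full write-up is that the generator does not depend on $q$ at all --- only the threshold ``for large enough $m$'' does --- which your argument already delivers since the contradiction is with a fixed polynomial size bound against a super-polynomial hardness threshold.
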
 
	We will use this theorem for $m = 2^{i+n}$.
	Then $f$ is a polynomial in $i+n$ (if $d$ is a constant), hence $f = \poly(n)$. Every element whose complexity is at most $i$ corresponds to a string of length $i$ in the natural way. So, we can assign 
	subfamilies of 
	$\mathcal{A}_{n,m}^{i,j}$ to  strings of length $m$. 
	
	Assume that there exists a circuit of size $2^{O(n)}$ and constant depth  that inputs 
	a subfamily of $\mathcal{A}_n^{i,j}$ and outputs {\bf $1$} if this subfamily satisfies properties \textup{(1)} and \textup{(2)} from Lemma \ref{prob}, and {\bf $0$} otherwise. First we prove Lemma~\ref{nw} using this assumption.
	
	Compute $G_m(y)$ for all strings $y$ of length  
	$f$ until we find a suitable one, i.e. whose image satisfies our two properties. 
	Such a string exists by  Lemma~\ref{prob}, Theorem~\ref{tnw} and our assumption. Note that we can find the lexicographically first  suitable string by using  space $m + \poly(n)$, so   \emph{bounded by space}
	$m + \poly(n)$  the complexity of this string is equal to $O(\log (n + m))$.  
	
	So, if we can construct a constant depth circuit of the needed size that verifies properties \textup{(1)} and \textup{(2)} then we are happy. Unfortunately we do not know how to construct such a circuit verifying the first property (there exist problems with a computation of threshold functions by constant-depth circuits---see \cite{fss}). However, we know the following result.
	\begin{theorem}[\cite{ajtai}]
		\label{porog}
		For every $t$ there exists a circuit of constant depth and $\poly(t)$ size that inputs binary strings of length $t$ and outputs $1$ if an input has at most $\log^2 t$ ones and $0$ otherwise. 
	\end{theorem}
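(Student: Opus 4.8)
The plan is to build the circuit around a single elementary gadget---detecting whether a \emph{fixed small set} of input positions contains more than a constant number of ones---together with a probabilistic, hardwired reduction that brings the global threshold $\log^2 t$ down to that constant regime. The reason one cannot simply write the natural disjunctive normal form (an OR over all $(\log^2 t+1)$-subsets of the ANDs of their bits) is that this formula has size $\binom{t}{\log^2 t}=2^{\Theta(\log^3 t)}$, which is quasi-polynomial rather than polynomial; and the same blow-up reappears in every ``split into blocks and add up the partial counts'' recursion, because iterated addition and exact counting are exactly the operations that constant-depth circuits provably cannot perform. This is the content of the parity/majority lower bounds, and is precisely why one invokes Ajtai rather than a one-line construction: the whole task is to avoid counting exactly while still resolving a threshold as low as $\log^2 t$.

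First I would fix a polynomial-size family of hash functions $h_1,\dots,h_N\colon[t]\to[q]$ with $q=\poly(\log t)$ buckets that \emph{isolates} small sets, i.e. for every set $W$ of at most $\log^2 t$ positions some $h_j$ is injective on $W$. Such a family exists with $N=\poly(\log t)$ by the probabilistic method (a random $h$ isolates a fixed $W$ with constant probability, and there are only $t^{\log^2 t}$ candidate sets $W$, so a union bound needs only $N>\log^3 t$ functions), and I would hardwire one concrete good family into the circuit. Under a hash $h_j$ the indicator ``bucket $b$ is nonempty'' is an OR of the bits mapped to $b$, hence depth one, and within each bucket the gadget ``this bucket holds at least two ones'' is the depth-two formula $\bigvee_{i<i'} z_i z_{i'}$ of polynomial size. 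The circuit would then use these gadgets to run an \emph{approximate count} in the style of Ajtai: by subsampling the positions at a geometric sequence of rates, again with hardwired good samples, it produces for each dyadic scale a one-bit verdict of the form ``the surviving ones look like at most, versus much more than, the current scale,'' and the final threshold decision is read off from the scale at which the verdict flips.

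The step I expect to be the real obstacle is exactly the one the stated bound insists on: obtaining \emph{polynomial} size and \emph{constant} depth simultaneously at a threshold as small as $\log^2 t$. Every straightforward implementation of the combining layer---summing the bucket indicators, or summing the per-scale verdicts---reintroduces an exact comparison against $\log^2 t$ and pays a quasi-polynomial $\binom{\poly(\log t)}{\log^2 t}$ factor, and the $\mathrm{AC}^0$ lower bounds forbid replacing that comparison by a genuine counter. The resolution, which is the heart of Ajtai's argument and the reason his theorem is quoted here as a black box, is that the count need only be resolved \emph{approximately and probabilistically}: the subsampling turns one delicate threshold into many independent constant-size thresholds, each a cheap depth-two formula, while the hardwired good samples guarantee the approximation is correct on \emph{every} input at once. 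Once this is granted, assembling the gadgets into a constant-depth polynomial-size circuit and reading off the output bit is routine.
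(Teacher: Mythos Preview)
The paper does not prove this theorem at all: it is quoted as a black-box result from Ajtai's paper \cite{ajtai} and used only through its statement when constructing the circuit that verifies property~$(1)^*$. So there is no proof in the paper to compare your proposal against.

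As for the content of your sketch: it is a fair high-level description of why the problem is nontrivial (the naive DNF has quasi-polynomial size, and exact counting is provably outside $\mathrm{AC}^0$), and the ingredients you list---hardwired isolating hash families and random subsampling at geometric scales---are indeed in the spirit of Ajtai's approximate counting. However, your write-up is explicitly a plan rather than a proof: you yourself note that the crucial combining step ``reintroduces an exact comparison against $\log^2 t$'' and then defer to ``the heart of Ajtai's argument'' as a black box. In particular, the hashing step as you describe it reduces a threshold-$\log^2 t$ question on $t$ bits to a threshold-$\log^2 t$ question on $\poly(\log t)$ bucket indicators, which is not yet in the constant regime; the actual mechanism by which Ajtai turns this into constant-size thresholds (so that each comparison is a cheap depth-two formula and a majority over independent trials gives the correct answer on every input) is precisely what your sketch omits. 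So your proposal is a reasonable outline of the difficulties and the flavor of the solution, but it is not a self-contained proof---which is consistent with the paper's own choice simply to cite the result.
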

	To use this theorem we make a little change of the first property. Divide $\mathcal{A}_n^{i,j}$ into $2^{i-k}$ parts of size $2^k$. The corrected property is the following.
	
	\textup{$(1)^*$} The family of sets $\mathcal{B}$ contains at most $(n + k)^2$ sets from each of these parts.
	
	\begin{lemma}
		\label{prob1}
		The family of sets $\mathcal{B}$ satisfies properties \textup{$(1)^*$} and \textup{(2)} with probability at least $\frac{1}{3}$.
	\end{lemma}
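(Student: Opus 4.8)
The plan is to treat the two properties separately and combine them by a union bound. Property \textup{(2)} is untouched by the passage from \textup{(1)} to $(1)^{*}$: the inclusion probability of each set is still $2^{-k}(n+2)\ln 2$, so the computation in the proof of Lemma~\ref{prob} applies verbatim and shows that \textup{(2)} fails with probability at most $\frac14$. It therefore remains to bound the probability that $(1)^{*}$ fails, and to arrange that the two failure probabilities sum to at most $\frac23$.

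For $(1)^{*}$ I would fix one of the $2^{i-k}$ parts and look at the number $X$ of its sets that land in $\mathcal{B}$. Since the part has $2^{k}$ sets, each included independently with probability $2^{-k}(n+2)\ln 2$, the variable $X$ is binomial with mean $\mu=(n+2)\ln 2$. The threshold $(n+k)^2$ sits well above $\mu$, so the upper-tail (Chernoff/Poisson) estimate gives
\[
\Pr[X>(n+k)^2]\le\left(\frac{e(n+2)\ln 2}{(n+k)^2}\right)^{(n+k)^2},
\]
which is doubly-exponentially small in $n+k$. This is the quantitative analogue of the Markov step that sufficed for the cruder property \textup{(1)}; here a genuine tail bound is needed because $(1)^{*}$ constrains every block simultaneously rather than only the grand total.

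Finally I would take a union bound over all $2^{i-k}$ parts. Because $\mathcal{A}$ is acceptable, both $i$ and $k$ are bounded by a polynomial in $n$, so the number of parts is at most $2^{\poly(n)}$, while the per-part failure probability above is $2^{-\Omega((n+k)^2\log(n+k))}$; multiplying these, the probability that $(1)^{*}$ fails is at most $\frac14$ once $n$ is large (the finitely many small $n$ being absorbed into the constant $c$ of Lemma~\ref{nw}). Combined with the estimate for \textup{(2)}, this yields $(1)^{*}$ and \textup{(2)} simultaneously with probability at least $1-\frac14-\frac14=\frac12\ge\frac13$. The delicate point is exactly this union bound: the per-part tail must survive multiplication by the number $2^{i-k}$ of blocks, which is why the threshold is taken as large as $(n+k)^2$ (so that it comfortably dominates the mean $\mu$) and why it is realised as $\log^2$ of a block padded to length $2^{n+k}$---the shape needed so that Ajtai's Theorem~\ref{porog} can later certify $(1)^{*}$ in constant depth without ever computing a true threshold.
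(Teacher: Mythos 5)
Your proposal is, in structure, exactly the paper's own proof: property \textup{(2)} is inherited unchanged from Lemma~\ref{prob} with failure probability at most $\frac{1}{4}$; a single part is controlled by a binomial tail estimate (your Chernoff/Poisson bound $(e\mu/v)^{v}$ with $v=(n+k)^2$ is the same quantity the paper derives by hand as $w\binom{w}{v}p^{v}<w\,(wp)^{v}/v!$); and the proof is finished by a union bound over the parts. Your closing remark that the threshold $(n+k)^2$ is really $\log^{2}$ of a part padded to length $2^{n+k}$, so that Theorem~\ref{porog} can later certify $(1)^{*}$ in constant depth, is also the correct reading of where this exponent comes from.

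The problem is the union bound itself, and it is a genuine gap, not a matter of presentation. The number of parts is $2^{i-k}$, and bounding it by $2^{\poly(n)}$ does not help: the polynomial bounding $i$ for an acceptable family depends on the family and can have arbitrarily high degree, while your per-part failure bound is only $2^{-\Omega((n+k)^{2}\log (n+k))}$ and $k$ can be as small as $O(\log (n+m))$ (below that, Lemma~\ref{nw} is vacuous anyway). Concretely, for the acceptable family of all $n^{9}$-element subsets of $\{0,1\}^{n}$ one can have $i\approx n^{10}$ and $k=\log^{2}n$; then each part overflows with probability about $2^{-\Theta(n^{2}\log n)}$, there are about $2^{n^{10}}$ parts, and---since the parts are disjoint and memberships in $\mathcal{B}$ are independent---the overflow events are independent, so property $(1)^{*}$ fails with probability almost $1$. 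In that regime the claim you are proving is itself false, so no bookkeeping can close your argument. You should know, however, that you have faithfully reproduced a gap that is already in the paper: its per-part estimate ``$\ll 2^{-2n}$'' is correct, but its conclusion that $(1)^{*}$ fails with probability at most $2^{-n}$ tacitly needs $2^{i-k}\le 2^{n}$, which is nowhere established. Two repairs work. Either note that in the intended application one may assume $i-k\le n+O(\log(n+m))$, because otherwise the lexicographically first set of $\mathcal{A}_{n,m}^{i,j}$ containing $x$ has complexity $n+O(\log(n+m))\le i-k+O(\log(n+m))$ and the conclusion of Lemma~\ref{nw} is trivial; or raise the threshold in $(1)^{*}$ from $(n+k)^{2}$ to $(n+k+i)^{2}$ (padding parts to length $2^{n+k+i}$ for Theorem~\ref{porog}), so that the per-part tail $2^{-\Omega((n+k+i)^{2})}$ beats the $2^{i-k}$ parts, while property \textup{(1)}, the final bound $i-k+O(\log(n+m))$, and the $2^{O(i+n)}$ circuit size are all unaffected.
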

	The proof of this lemma is not difficult but uses cumbersome formulas. We present the proof of Lemma \ref{prob1} in Appendix.
	\begin{proof}[of Lemma \ref{nw}]
		It is clear that property \textup{$(1)^*$} implies property \textup{(1)}. Hence by using Lemma \ref{nw} and the discussion above, it is  enough to show that properties \textup{$(1)^*$} and \textup{(2)} can be verified by constant depth circuits of size
		$2^{O(i + n)}$.
		
		Such a circuit  exists for property \textup{$(1)^*$} by Theorem \ref{porog}.
		
		The second property can be verified by the following $2$-depth circuit. For every string of length $n$ containing in $2^k$ sets from $\mathcal{A}_n^{i,j}$ there exists a corresponding disjunct. All of these disjuncts go to a conjunction gate.

	\end{proof}
	\section{Proof of Theorem \ref{sets_and_distributions_for_memory1}}
	Theorem \ref{sets_and_distributions_for_memory1} would have an easy proof if a program that corresponds to a distribution $P$ could use only $\poly(n)$  random bits. Indeed, in such case we can run a program with all possible random bits and so calculate $P(x)$ for every $x$ in polynomial space. Hence, we can describe $A$ as the set of all strings whose the probability of output is at least $2^{-k}$, where $2^{-k} \ge P(x) > 2^{-k-1}$. 
	
	In the general case (when the number of random bits is exponentially large) we will use the following theorem.
	\begin{theorem}[\cite{nisan}] 
		\label{nis}
		Let $f$ be a probabilistic program, that uses at most $r(n)$ space on inputs of length $n$ for some polynomial $r$. Assume that $f$ always outputs $0$ or $1$ (in particular, $f$ never loops). Then there exists a deterministic program $\widehat{f}$ with the following properties:

		\textup{(a)} $\widehat{f}$ uses at most  $r^2(n)$ space on inputs of length $n$;
		
		\textup{(b)} if $Pr[f(x)=1] > \frac{2}{3}$ then $\widehat{f}(x)=1$.  If $Pr[f(x)=1]  < \frac{1}{3}$ then $\widehat{f}(x)=0$;
		
		\textup{(c)} $|\widehat{f}| \le |f| + O(1)$. \footnote{Theorem 1.2 in \cite{nisan} has another formulation: it does not contain any information about $|\widehat{f}|$. However, from the proof of the theorem it follows that a needed program (denote it as $\widehat{f}_1$) is got from $f$ by using an algorithmic transformation. Therefore there exists a program $\widehat{f}$ that works functionally like $\widehat{f}_1$ such that $|\widehat{f}| \le |f| + O(1)$.
			
			Also, Theorem 1.2 does not assume that $\Pr[f(x)]$ can belong to $[\frac{1}{3}; \frac{2}{3}]$. However, this assumption does not used in the proof of Theorem 1.2.} 
	\end{theorem}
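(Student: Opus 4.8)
The plan is to prove Theorem~\ref{nis} via Nisan's pseudorandom generator for space-bounded computation (the engine behind $\RL \subseteq \SC$), which fools read-once oblivious branching programs, and then to derandomize $f$ by enumerating all seeds and taking a majority vote. Fix the input $x$ and write $s = r(n)$. First I would realize $f$ on $x$ as a branching program. Since $f$ always halts and uses space $s$, any configuration of $f$ (state, work tape, head positions) is described by $O(s)$ bits, so there are $w = 2^{O(s)}$ reachable configurations and $f$ halts within $2^{O(s)}$ steps; hence it reads at most $R \le 2^{O(s)}$ random bits. Reading these bits in order while updating the $O(s)$-bit configuration exhibits $f(x)$ as a width-$w$, length-$R$ ordered (read-once) branching program $\Pi_x$ whose single output bit equals $f(x)$. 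Thus $\Pr[f(x)=1]$ is exactly the acceptance probability of $\Pi_x$ on a uniform input.

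Next I would construct the generator. Partition the $R$ bits into blocks of length $\ell = \Theta(s)$, giving $2^{j}$ blocks with $j = \log(R/\ell) = O(s)$. Let $H$ be a family of pairwise independent hash functions $h \colon \{0,1\}^{\ell} \to \{0,1\}^{\ell}$, each describable in $O(\ell)$ bits, and define recursively
$$G_0(x) = x, \qquad G_t(x; h_1,\dots,h_t) = G_{t-1}(x; h_1,\dots,h_{t-1})\,G_{t-1}\bigl(h_t(x); h_1,\dots,h_{t-1}\bigr),$$
so that $G_j$ emits $\ell \cdot 2^{j} = R$ bits from a seed of length $\ell + j \cdot O(\ell) = O(j\ell) = O(s^2)$. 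Each output block is indexed by a leaf of a depth-$j$ binary tree, obtained by applying to the base block the subset of hashes prescribed by the binary expansion of the block index; so any single block is computable from the seed in $O(s)$ working space, and the whole generator is computable block by block in space $O(s^2)$.

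The heart of the argument, and the step I expect to be the main obstacle, is the error analysis: that $\Pi_x$ cannot distinguish $G_j$ from uniform by more than, say, $\tfrac16$. Encoding one block of $\Pi_x$ as a $w \times w$ substochastic transition matrix $M(\cdot)$, the recursion glues two halves with (seed-dependent) transition matrices $P$ and $Q$ through a random $h \in H$. The key \emph{recombination lemma} asserts that $\mathbb{E}_{x,h}\bigl[P(x)\,Q(h(x))\bigr]$ is entrywise within $2^{-\Omega(\ell)}$ of $\mathbb{E}_x[P(x)]\cdot\mathbb{E}_x[Q(x)]$; pairwise independence of $H$ is exactly what makes this second-moment estimate go through, since for fixed $x$ the value $h(x)$ is uniform and the relevant covariances are controlled. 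One then shows these deviations accumulate benignly across the $j$ levels, so that the total distinguishing advantage of $\Pi_x$ is at most $2^{O(s)} \cdot 2^{-\Omega(\ell)}$. Choosing $\ell = C s$ with $C$ large enough (recall $w = 2^{O(s)}$ and $j = O(s)$) drives this below $\tfrac16$ while keeping the seed length at $O(s^2)$.

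Finally I would assemble $\widehat f$. On input $x$ it enumerates all seeds $\sigma \in \{0,1\}^{O(s^2)}$; for each $\sigma$ it simulates $f(x)$, supplying the $i$-th requested random bit by recomputing the relevant block of $G_j(\sigma)$ on demand, and it counts how many seeds lead to acceptance, outputting the majority value. The space is $O(s^2)$ for the seed counter, $O(s^2)$ to evaluate the generator, $O(s)$ to simulate $f$, and $O(s^2)$ for the accept-counter, i.e.\ $r^2(n)$ after absorbing constants into the square, which gives part~(a). For part~(b): if $\Pr[f(x)=1] > \tfrac23$, the fooling bound forces an accepting-seed fraction exceeding $\tfrac23 - \tfrac16 > \tfrac12$, so $\widehat f(x)=1$, and symmetrically $\widehat f(x)=0$ when $\Pr[f(x)=1] < \tfrac13$. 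For part~(c), $\widehat f$ is produced from $f$ by one fixed transformation, namely the enumerate-and-vote shell with the generator hard-wired and $f$ invoked as a subroutine, so $|\widehat f| \le |f| + O(1)$.
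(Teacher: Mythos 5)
The paper offers no proof of Theorem~\ref{nis} at all: it is imported as a black box from \cite{nisan} (Theorem~1.2 there), and the only original content is the footnote, which observes (i) that Nisan's transformation $f \mapsto \widehat{f}$ is algorithmic, so on a suitable universal machine $|\widehat{f}| \le |f| + O(1)$, and (ii) that the promise case $\Pr[f(x)=1] \in [\tfrac13,\tfrac23]$ does not disturb the proof. So your proposal must stand on its own as a reconstruction of Nisan's argument, and in outline it does: realizing the space-$s$ computation on a fixed input as a width-$2^{O(s)}$, length-$2^{O(s)}$ read-once branching program (your derivation of $R \le 2^{O(s)}$ from ``never loops''---a repeated configuration on some path would yield an infinite path---is exactly why that hypothesis appears in the statement), fooling it by a generator with seed length $O(s^2)$, then enumerating seeds and taking majority. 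The assembly of parts (a), (b), (c) is correct, and your handling of (c) and of the promise case agrees with the paper's footnote. Two remarks: plain seed enumeration runs in time $2^{O(s^2)}$, which is weaker than what Nisan's Theorem~1.2 actually provides (space $O(s^2)$ \emph{simultaneously} with time $2^{O(s)}$, obtained by fixing good hash functions one at a time instead of enumerating seeds), but only the space bound is claimed here, so this loss is immaterial; and the constant in your $O(r(n)^2)$ versus the stated $r^2(n)$ is absorbed by the paper's convention that the universal machine preserves space only up to a constant factor.

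The genuine gap is in the step you yourself flagged as the heart. The recombination lemma as you state it---$\mathbb{E}_{x,h}\bigl[P(x)\,Q(h(x))\bigr]$ is entrywise within $2^{-\Omega(\ell)}$ of $\mathbb{E}_x[P(x)]\cdot\mathbb{E}_y[Q(y)]$---is not the lemma that powers the recursion: averaged over \emph{both} $x$ and $h$, it is the trivial first-moment identity, holding exactly (with error $0$) simply because $h(x)$ is uniform for each fixed $x$; pairwise independence is not even needed for it. What the induction requires is the concentration form: for all but a $2^{-\Omega(\ell)}$ fraction of $h$, the matrix $\mathbb{E}_x[P(x)Q(h(x))]$ is close to $\mathbb{E}_x[P]\,\mathbb{E}_y[Q]$ (Chebyshev over $h$; this is where pairwise independence of the family enters). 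The high-probability form is indispensable because the generator reuses a single $h_t$ for all $2^{j-t}$ pairs at level $t$, and both halves of every pair share $h_1,\dots,h_{t-1}$: with only the averaged statement, the factors $\mathbb{E}_x[P(x)]$ and $\mathbb{E}_y[Q(y)]$ remain \emph{correlated} random variables (both are functions of the shared hashes), so the product of expectations cannot be split off and your claimed ``benign accumulation'' across levels cannot be carried out. The repair is standard and is precisely Nisan's main lemma---prove the concentration statement, union-bound over the $2^{O(s)}$ nodes of the recursion tree so that with positive probability all nodes are simultaneously good, then induct on levels---but as written your sketch substitutes a vacuous statement for the crucial one.
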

	\begin{proof}[of Theorem \ref{sets_and_distributions_for_memory1}]
		If the complexity of distribution $P$ 
		(bounded by space $m$) is equal to infinity then we can take $\{x\}$ as $A$.
		
		Else $P$ can be specified by a program $g$. Consider the integer $k$ such that: $2^{-k+1} \ge P(x) \ge 2^{-k}$. We can assume that 
		$k$ is not greater than $n$---the length of $x$---else we can take $\{0,1\}^n$ as $A$.   
		
		Note, that we can find a good approximation for $P(y)$ running $g$ exponentially times.
		
		More accurately, let us run $g$ for $2^{100k^2}$ times. For every string $y$ denote by $\omega(y)$ the frequency of output of $y$. The following inequality holds by  Hoeffding's inequality  
		$$\Pr [ |w(y) - P(y)| > 2^{-k-10}]< \frac{1}{3}.$$
		Hence by using program $g$ we can construct a program $f$ that uses $\poly(n)$ space (on inputs of length $n$) such that
		
		(1) if $P(y) > 2^{-k-1}$ and $|y| = n$ then $\Pr [f(y)=1] > \frac{2}{3}$;
		
		\medskip
		
		(2) if $P(y) < 2^{-k-2}$ then $\Pr [f(y)=0] > \frac{2}{3}$. 
		
		Now using Theorem \ref{nis} for $f$ we get a program $\widehat{f}$ such that $|\widehat{f}| \le |g| + O(\log n)$.
		By the first property of $f$ we get $\widehat{f}(x)=1$. From the second property it follows that the cardinality of the set $\{ y \mid \widehat{f}(y)=1  \}$ is not greater than $2^{k+2}$. So, this set satisfies the requirements of the theorem.
	\end{proof}
	
	\begin{remark}
		Another proof of Theorem \ref{sets_and_distributions_for_memory1}
		was done by Ricky Demer at Stackexchange -- http://cstheory.stackexchange.com/questions/34896/can-every-distribution-producible-by-a-probabilistic-pspace-machine-be-produced.
	\end{remark}
	\section*{Open question}
	Does Hypothesis \ref{hyp} hold?
	\section*{Acknowledgments}
	I would like to thank Nikolay Vereshchagin and Alexander Shen for useful discussions, advice and remarks.
	
	This work is supported by RFBR grant 16-01-00362 and supported in part by Young Russian Mathematics award and RaCAF ANR-15-CE40-0016-01 grant. The study has been funded by the Russian Academic Excellence Project `5-100'.

	\section*{Appendix}
	\subsection*{Symmetry of Information}
	Define  $\CD ^m(A,B)$ as the minimal length of a program that inputs a pair of strings $(a,b)$ and outputs a pair of boolean values $(a \in A, b \in B)$ using space at most $m$ for every input.
	
	\begin{lemma}[Symmetry of information] 
		\label{sim_inf}
		Assume $A, B \subseteq \{0,1\}^n$. Then
		$$ \textup{(a) } \forall m \text{ } \CD ^p(A, B)
		\le \CD ^m(A) + \CD ^m(B \cnd A)  + O(\log (\CD ^m(A,B)+m + n))$$ for $p = m + \poly(n + \CD ^m(A,B))$. 
		$$ \textup{(b) } \forall m \text{ } 
		\CD ^p(A) + \CD ^p(B \cnd A) \le \CD ^m(A, B) + O(\log (\CD ^m(A,B)+m + n) )$$ for $p = 2m + \poly(n + \CD ^m(A,B))$.
	\end{lemma}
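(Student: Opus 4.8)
The plan is to prove part (a) by directly composing distinguishers and part (b) by the space-bounded adaptation of the classical counting proof of symmetry of information (in the spirit of \cite{longpre}), where essentially all the work lies in checking that every enumeration fits into polynomial space. For (a), fix an optimal distinguisher $p_A$ for $A$ of length $\CD^m(A)$ running in space $m$, and an optimal distinguisher $q$ for $B$ with oracle $A$ of length $\CD^m(B\cnd A)$ running in space $m$. I would build a single program that on input $(a,b)$ outputs the pair $(p_A(a),\, q^A(b))$, answering every oracle query ``$z\in A$?'' made by $q$ by re-running $p_A$ on $z$. Its length is $\CD^m(A)+\CD^m(B\cnd A)$ plus an $O(\log(\CD^m(A,B)+n))$ overhead for self-delimiting concatenation, and tracking the space used (holding the configuration of $q$ while re-running $p_A$ on a length-$n$ query, plus $\poly(n)$ for the query tape) yields the stated bound $p=m+\poly(n+\CD^m(A,B))$. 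This direction is routine.

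For (b), write $k:=\CD^m(A,B)$ and mirror the three-step argument behind statement (3) of Subsection~\ref{sketch}. \emph{Step 1:} given the oracle $A$ and the numbers $k,n,m$, enumerate all programs of length $\le k$ that halt within space $m$ on every length-$2n$ input, and for each valid pair-distinguisher $Q$ extract the set $B_Q:=\{b\in\{0,1\}^n : Q \text{ accepts in its second coordinate}\}$, keeping only those $Q$ whose first-coordinate set equals $A$ (verified against the oracle). Recording the index of $B$ among the distinct such $B_Q$ gives $\CD^p(B\cnd A)\le \log N_A + O(\log(k+n))$, where $N_A$ is the number of distinct $B'$ with $\CD^m(A,B')\le k$. \emph{Step 2:} since there are at most $2^{k+1}$ programs of length $\le k$, the distinct valid pairs $(A',B')$ with $\CD^m(A',B')\le k$ number at most $2^{k+1}$, so the $A'$ with $N_{A'}\ge 2^{s}$ number at most $2^{k+1-s}$; enumerating these heavy $A'$ for $s=\lfloor\log N_A\rfloor$ and recording the index of $A$ gives $\CD^p(A)\le k-s+O(\log(k+n))$, i.e. $\log N_A\le k-\CD^p(A)+O(\log(k+n))$. \emph{Step 3:} adding the two inequalities cancels $\log N_A$ and yields $\CD^p(A)+\CD^p(B\cnd A)\le k+O(\log(k+n))=\CD^m(A,B)+O(\log(\CD^m(A,B)+m+n))$, which is the claim.

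The hard part will be carrying out all of these counts in polynomial space, and this is exactly what forces the bound $p=2m+\poly(n+\CD^m(A,B))$. Iterating over the $\le 2^{k+1}$ programs costs only $O(k)$ space for the counter, and one simulation on one input costs space $m$; the delicate points are (i) deciding whether a candidate program stays within space $m$ and halts, which I would handle by cutting off each computation after $2^{O(m+n)}$ steps (a step count representable in space $O(m+n)$), and (ii) enumerating the \emph{distinct} sets $B_Q$ without storing them, which I would do by counting a set only at the lexicographically first program producing it---verifying firstness requires running two candidate programs simultaneously and comparing their outputs over all $b\in\{0,1\}^n$, and this simultaneous simulation is what produces the $2m$ term. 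Computing each $N_{A'}$ in Step 2 nests the Step 1 enumeration inside the Step 2 loop, but because space is reused rather than accumulated across the nesting, the total stays at $2m+\poly(n+k)$. Note that, unlike the proof of Theorem~\ref{def}, no Nisan--Wigderson derandomization is needed here, since every count is deterministic; the oracle $A$ enters only through the membership checks pinning down the first coordinate, answered at no cost beyond a $\poly(n)$ query tape.
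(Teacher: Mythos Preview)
Your proposal is correct and follows essentially the same route as the paper: part~(a) by composing distinguishers (the paper simply says ``similar to the proof of Theorem~\ref{def}(a)''), and part~(b) by the standard counting argument---bound $\CD^p(B\cnd A)$ by $\log N_A$ via an oracle-$A$ enumeration of pairs $(A,V)$ with $\CD^m(A,V)\le k$, bound $\CD^p(A)$ by $k-\log N_A$ via an enumeration of the ``heavy'' first coordinates $U$ with $N_U\ge 2^t$, and add. Your treatment of the distinctness check (counting a set only at the lexicographically first program producing it, which forces simultaneous simulation of two candidates and hence the $2m$ term) is actually more explicit than the paper's, but the skeleton of the argument is identical.
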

	\begin{proof}[of Lemma~\ref{sim_inf} \textup{(a)}]
		The proof is similar to the proof of Theorem \ref{def} \textup{(a)}.
	\end{proof}
	\begin{proof}[of Lemma~\ref{sim_inf} \textup{(b)}]
		Let $k:= \CD ^m(A, B)$.
		Denote by $\mathcal{D}$ the family of sets $(U,V)$ such that  $\CD ^m(U,V) \le k$ and $U,V \subseteq \{0,1\}^n$. It is clear that  $|\mathcal{D}| < 2^{k+1}$. Denote by
		$\mathcal{D}_{A}$ the pairs of $\mathcal{D}$ whose the first element is equal to $A$.
		Let $t$ satisfy the inequalities $2^t \le |\mathcal{D}_{A}| < 2^{t+1}$.
		
		Let us prove that
		\begin{itemize}
			\item 
			$\CD ^p(B \cnd A)$ does not exceed $t$ significantly;
			\item
			$\CD ^p(A)$ does not exceed $k - t$ significantly.
		\end{itemize}
		Here $p=m + O(n)$.
		
		We start with the first statement. There exists a program that enumerates all sets from $\mathcal{D}_{A}$  using $A$ as an oracle and that works in space $2m + O(n)$. Indeed, such enumeration can be done in the following way: enumerate all programs of length $k$ and verify the following condition for every pair of $n$-bit strings. First, a program uses at most $m$ space on this input. Second, if a second  $n$-bit string belongs to $A$ then the program outputs $1$, and $0$ otherwise. Since some program loops we need aditional $m + O(n)$ space to take it into account. 
		
		Append to this  program the ordinal number of a program that distinguishes $(A,B)$. This number is not greater than $t+1$. Therefore we have $\CD ^p(B \cnd A) \le t + O(\log (\CD ^m(A,B) + m + n))$. 
		
		Now let us prove the second statement.
		Note that there exist at most $2^{k-t +1}$ sets $U$ such that $|\mathcal{D}_U| \ge 2^t$ (including $A$). Hence,  if we construct a program that enumerates all sets with such property
		(and  does not use much space) then we will win---the set $A$ can be described by the ordinal number of this enumeration.
		
		Let us construct such a program. It works as follows:
		
		enumerate all sets $U$ that are the first elements from $\mathcal{D}$, i.e. we enumerate programs that distinguish the corresponding sets (say, lexicographically). We go to the next step if the following properties holds. First, 
		$|\mathcal{D}_U| \ge 2^t$, and second: we did not meet set $U$ earlier (i.e. every program whose the lexicographical number is smaller does not distinguish $U$ or is not the first element from a set from $\mathcal{D}$). 
		
		This program works in  $2m + \poly(n + \CD ^m(A,B))$ space (that we want) and has length 
		$O(\log (\CD ^m(A)+ n +m))$. 
	\end{proof}
	
	\begin{proof}[of Lemma
		\ref{prob1}]
		Let us show that $\mathcal{B}$ satisfies property 
		\textup{$(1)^*$} with probability at most $2^{-n}$. 
		Since $\mathcal{B}$ satisfies property 
		\textup{(2)} with probability at most $\frac{1}{4}$ (see the proof of Lemma~\ref{prob}) it would be enough for us.
		
		For this let us show that every part is `bad' (i.e. has at least $(n + k)^2 + 1$ sets from $\mathcal{B}$) with probability at most $2^{-2n}$.  The probability of such event is equal to the probability of the following event: a binomial random variable with parameters $(2^k, 2^{-k}(n + 2)\ln 2)$ is greater than $(n + k)^2$. To get the needed upper bound for this probability is not difficult however the correspondent formulas are cumbersome. Take $w:=2^k$, $p:=2^{-k}(n + 2)\ln 2$ and $v:=(n + k)^2$. We need to estimate
		$$\sum_{i=v}^{w} {{w}\choose{i}} p^i(1-p)^{w-i} < w \cdot {{w}\choose{v}} p^v(1-p)^{w-v} < w \cdot {{w}\choose{v}} p^v < w \frac{(wp)^{v}}{v!}.$$
		The first inequality holds since $wp = (n+2) \ln 2 \le (n+k)^2 = v$. 
		Now note that $wp= (n+2) \ln 2 < 10 n$. So
		
		$$w \frac{(wp)^{v}}{v!} < 
		\frac{2^k (10n)^{(n+k)^2}}{((n+k)^2)!} 
		\ll 2^{-2n}.
		$$
	\end{proof}


\begin{thebibliography}{M}
		\bibitem{ajtai}
		Ajtai M., Approximate counting with uniform constant-depth circuits, \emph{Advanced in computational complexity theory} -- American Mathematical Society, 1993, pp. 1--20.
		
		\bibitem{bfl}
		Buhrman, H.; Fortnow, L.; and Laplante, S., 	Resource-Bounded Kolmogorov Complexity Revisited. \emph{SIAM Journal on Computing}, 31(3): 887-905. 2002.
		
		\bibitem{fss}
		Furst, M., Saxe, J.B. and Sipser, M. Math. Systems Theory (1984), Volume \textbf{17}, Issue 1, pp 13--27
		
		\bibitem{kolm65}
		A.N.~Kolmogorov, Three Approaches to the Quantitative Definition of Information   \emph{Problems of Information Transmission} , \textbf{1}(1), 4--11 (1965). English translation published in: \emph{International Journal of Computer Mathematics}, \textbf{2}, 157--168 (1968).
		
		\bibitem{lv}
		M. Li and P.M.B. Vit\'anyi,
		\emph{An Introduction to Kolmogorov Complexity and its Applications},
		Third Edition, Springer-Verlag, New York, 2008
		
		\bibitem{longpre}
		L.~Longpr\'e, Resource Bounded Kolmogorov Complexity, A Link Between Computational Complexity and Information Theory, \emph{Ph. D. Thesis}, 1986, Cornell University, Ithaca, NY.
		
		\bibitem{musatov}
		Musatov D., Improving the Space-Bounded Version of Muchnik's Conditional Complexity Theorem via ``Naive'' Derandomization, \emph{Theory of computing systems}, 2014, vol. \textbf{55}, no.2, pp. 299--312.
		
		\bibitem{nisan}
		N. Nisan, $RL \subseteq SC$. \emph{Journal of Computional Complexity}, vol. \textbf{4}, pages 1--11, 1994.
		
		\bibitem{nis}
		N. Nisan, Pseudorandom bits for constant depth circuits. \emph{Combinatorica}, 1991, vol. \textbf{11}, pp. 63--70. 
		
		\bibitem{nw}
		Nisan N., Wigderson A., Hardness vs randomness, \emph{Journal of Computer and System Sciences}, 1994,
		Volume \textbf{49} Issue 2, 
		Pages 149--167  	
		
		\bibitem{shen15}
		A.~Shen, Around Kolmogorov complexity: basic notions and results. \emph{Measures of Complexity. Festschrift for Alexey Chervonenkis}. Editors: V.~Vovk, H.~Papadoupoulos, A.~Gammerman. Springer, 2015. ISBN: 978-3-319-21851-9
		
		\bibitem{shen}
		A. Shen
		The concept of $(\alpha, \beta)$-stochasticity in the Kolmogorov sense, and its properties. \emph{Soviet Mathematics Doklady}, \textbf{271}(1):295--299, 1983
		
		\bibitem{suv}
		A. Shen, V. Uspensky, N. Vereshchagin
		{\em Kolmogorov complexity and algorithmic randomness}.
		MCCME, 2013 (Russian). English translation:
		http://www.lirmm.fr/\~{}ashen/kolmbook-eng.pdf
		
		\bibitem{sipser}
		M. Sipser. A complexity theoretic approach to randomness. In \emph{Proceedings of the 15th ACM Symposium on the Theory of Computing}, pages 330-335, 1983.
		
		\bibitem{vv}
		N. Vereshchagin and P. Vit\'anyi,
		Kolmogorov's Structure Functions with an Application to the Foundations of Model Selection, \emph{IEEE Transactions on Information Theory} \textbf{50}:12 (2004), 3265--3290. Preliminary version: \emph{Proceedings of 47th IEEE Symposium on the Foundations of Computer Science}, 2002, 751--760.
		
		\bibitem{VerVit}
		N.K. Vereshchagin, P.M.B. Vit\'anyi
		{\em Rate Distortion a
			nd Denoising of Individual
			Data Using Kolmogorov Complexity}
		IEEE Transactions on Information Theory,\textbf{56}:7 (2010). 3438--3454 
		
	\end{thebibliography}
\end{document}